\newif\ifIEEE
\IEEEtrue
\ifIEEE
   \documentclass[journal,final,letterpaper]{IEEEtran}
   \newcommand{\bibauthor}[1]{#1}
   \newcommand{\bibpaper}[1]{``#1''}
   
\else
   \documentclass[12pt]{article}
   \sloppy 
   \textheight8.5in
   \textwidth6.5in
   \hoffset -0.6in
   \parskip 8pt
   \usepackage{amsthm}
   \newcommand{\bibauthor}[1]{\textsc{#1}}
   \newcommand{\bibpaper}[1]{\textsl{#1}}
   \newenvironment{IEEEkeywords}{\begin{small}%
                                  \textbf{Index Terms} ---}{\end{small}}
\fi
\usepackage{amsfonts,bm}
\usepackage{amsthm}
\usepackage{color}
\usepackage{pict2e}
\newcommand{\bibbook}[1]{\textit{#1}}

\newcommand{\bibperiodical}[1]{\textit{#1}}

\newtheorem{theorem}{Theorem}
\newtheorem{proposition}[theorem]{Proposition}
\newtheorem{lemma}[theorem]{Lemma}
\newtheorem{corollary}[theorem]{Corollary}

\theoremstyle{remark}
\newtheorem{remark}{Remark}
\theoremstyle{definition}
\newtheorem{example}{Example}
\renewcommand{\mathbf}[1]{{\bm{#1}}}     

\newcommand{\encoder}{{\mathcal{E}}}
\newcommand{\bldzero}{{\mathbf{0}}}
\newcommand{\bldone}{{\mathbf{1}}}
\newcommand{\bldc}{{\mathbf{c}}}

\newcommand{\bldx}{{\mathbf{x}}}
\newcommand{\bldy}{{\mathbf{y}}}
\newcommand{\bldw}{{\mathbf{w}}}
\newcommand{\bldxi}{{\mathbf{\xi}}}
\newcommand{\anticipation}{{\mathcal{A}}}

\newcommand{\FF}{{\mathcal{F}}}
\newcommand{\XX}{{\mathcal{X}}}
\newcommand{\capacity}{{\mathsf{cap}}}
\newcommand{\period}{{\mathsf{p}}}
\newcommand{\terminal}{\tau}

\newcommand{\parity}{{\mathsf{b}}}
\newcommand{\ratio}{\rho}
\newcommand{\Stether}{{\mathcal{C}}}
\newcommand{\Array}{{\mathcal{B}}}
\newcommand{\Int}[1]{{\left[{#1}\right\rangle}}
\newcommand{\resp}{respectively}
\newcommand{\Title}{On Bi-Modal Constrained Coding}
\newcommand{\Namea}{Ron M. Roth}
\newcommand{\Nameb}{Paul H. Siegel}
\newcommand{\Addressa}{Computer Science Department}
\newcommand{\Addressatwo}{Technion, Haifa 320003, Israel}
\newcommand{\Addressb}{ECE Department and CMRR}
\newcommand{\Addressbtwo}{UC San Diego, La Jolla, CA 92023, USA}

\newcommand{\Emaila}{ronny@cs.technion.ac.il}
\newcommand{\Emailb}{psiegel@ucsd.edu}
\newcommand{\Grant}{This work was supported by 
            Grants~2015816 and~2018048 from
            the United-States--Israel Binational
            Science Foundation (BSF), by NSF Grant CCF-BSF-1619053,
            and by Grant~1396/16 from the Israel Science Foundation.}
\newcommand{\Addressalt}{This work was done in part while R.M. Roth
            was visiting the Center for Memory and Recording
            Research (CMRR), UC San Diego}
\newcommand{\Conference}{
            This work, under the title
            ``On parity-preserving constrained coding,''
            was presented in part at the
            IEEE Int'l Symposium on Information Theory, 
            Vail, Colorado (June 2018).} 
\newcommand{\Thnxa}{\Namea\ is with the \Addressa, \Addressatwo.
                    \Addressalt.
                    Email: \Emaila}
\newcommand{\Thnxb}{\Nameb\ is with the \Addressb, \Addressbtwo.
                    Email: \Emailb}

\begin{document}
\ifIEEE
   \title{\Title}
          \author{\Namea
                  \quad\quad
                  \Nameb
                  \thanks{\Grant\Conference}
                  \thanks{\Thnxa}
                  \thanks{\Thnxb}}
\else
   \title{\textbf{\Title}\thanks{\Grant\Conference}}
   \author{\textsc{\Namea}\thanks{\Thnxa}
   \and
          \textsc{\Nameb}\thanks{\Thnxb}
   }
\fi
\maketitle

\begin{abstract}
Bi-modal (\resp, multi-modal) constrained coding refers
to an encoding model whereby
a user input block can be mapped to two (\resp, multiple)
codewords.  In current storage applications, such as optical disks,
multi-modal coding allows to achieve DC control, in addition to
satisfying the runlength limited (RLL) constraint specified
by the recording channel. In this work, a study is initiated on
bi-modal fixed-length constrained encoders.
Necessary and sufficient conditions are presented for
the existence of such encoders for a given constraint.
It is also shown that under somewhat stronger conditions, 
one can guarantee a bi-modal encoder with finite decoding delay.
\end{abstract}

\begin{IEEEkeywords}
Approximate eigenvectors,
Bi-modal encoders,
Constrained codes,
Multi-modal encoders,
Parity-preserving encoders.
\end{IEEEkeywords}

\section{Introduction}
\label{sec:introduction}

Runlength limited (RLL) coding is widely employed in magnetic and
optical storage in order to mitigate the effects of inter-symbol
interference and clock drifting~\cite{Immink}. The encoder typically
takes the form of a finite-state machine, which maps a sequence
of input $p$-bit blocks into a sequence of output $q$-bit codewords,
so that the concatenation of the generated codewords satisfies
the RLL constraint. In most applications, the coding scheme also
provides DC control (or, more generally, suppression of
the low frequencies). This is achieved by 
\emph{bi-modal} (\resp, \emph{multi-modal})
\emph{encoding}, allowing some (or all)
input $p$-blocks to be mapped by the encoder
to two (\resp, multiple) codewords,
and then, during the encoding process, selecting
the codeword that yields the best DC suppression.
In one implementation of this strategy, the input sequence
is sub-divided into non-overlapping windows
(each consisting of one or more $p$-blocks),
and each window can be mapped by
the encoder to two output sequences over $\{ 0, 1 \}$
that have different parities (i.e., modulo-$2$ sums).
The generated output binary sequence is then transformed
(``precoded'') into sequences over the bipolar alphabet
$\{ +1, -1 \}$, with the binary $1$s corresponding to
positions of the bipolar sign changes~\cite[p.~52]{Immink}.
DC control is achieved by selecting the output sequence
that minimizes the DC contents~\cite[p.~29]{MRS}. 
A \emph{parity-preserving} encoder is one embodiment of this
approach and is part of the Blu-ray standard.
In such an encoder,
the parity of the input sub-sequence within each window
is preserved at the generated output sequence;
one bit in each window is then reserved to set the parity
of (the input window and) the output sequence, thereby
controlling the DC contents~\cite[\S 11.4.3]{Immink},
\cite{KI},
\cite{MSIWUYN},
\cite{NKHSEKDW},
\cite{NY},
\cite{WIXC}.
Another implementation of DC control via multi-modal encoding 
maps the input sequence (without precoding) into bipolar sequences
with different \emph{disparities}
(i.e., different signs of their sums of entries).
Multi-modal encoding can be
combined with guided scrambling~\cite[Ch.~10]{Immink}.

Most constructions so far of multi-modal encoders and, in particular,
of parity-preserving encoders, were obtained
by ad-hoc methods. The purpose of this work is to initiate a study
of bi-modal (and multi-modal) encoders, starting with the special case 
of fixed-length finite-state encoders.
We provide a formal definition of our setting
in Section~\ref{sec:ppencoders} below,
following a summary (in Section~\ref{sec:background})
of some background and definitions which are taken from~\cite{MRS}.
The main result of the paper, which we state in
Section~\ref{sec:mainresult}, is a necessary and sufficient
condition for the existence of a bi-modal encoder.
We prove the necessity part in Section~\ref{sec:necessity},
followed by a construction method of bi-modal encoders
in Section~\ref{sec:sufficiency}, thereby establishing sufficiency.
The more general variable-length model is a subject of
future work~\cite{RS} and is briefly discussed
in Section~\ref{sec:vle}.

\subsection{Background}
\label{sec:background}

A (finite labeled directed) graph is a graph $G = (V,E,L)$
with a nonempty finite state (vertex) set $V = V(G)$,
finite edge set $E = E(G)$, and edge labeling
$L : E \rightarrow \Sigma$.
The constraint presented by~$G$, denoted $S(G)$,
is the set of words over~$\Sigma$ that are generated
by finite paths in~$G$.
The set of words that are generated by finite paths that start
at a given state $u \in V(G)$ is called
the follower set of~$u$ and is denoted~$\FF_G(u)$.

A graph~$G$ is deterministic if
all outgoing edges from a state are distinctly labeled.
Every constraint has a deterministic presentation.
A graph~$G$ is lossless if no two paths with the same initial state
and the same terminal state generate the same word.
The anticipation~$\anticipation(G)$ of~$G$
is the smallest nonnegative integer~$a$
(if any) such that all paths that generate any given
word of length~$a{+}1$ from any given state in~$G$
share the same first edge (thus, the anticipation of
deterministic graphs is~$0$; generally, having finite anticipation is
a stronger property than losslessness).
A graph is $(m,a)$-definite if all paths that generate
a given word of length $m{+}a{+}1$ share the same $(m{+}1)$st edge.
A graph is said to have finite memory~$\mu$ if~$\mu$ is the smallest
nonnegative integer (if any) such that all paths of length~$\mu$
that generate the same word terminate in the same state.

A graph~$G$ is irreducible if it is strongly-connected.
Every graph decomposes uniquely into irreducible components,
and at least one such component must be an irreducible sink
(i.e., with no outgoing edges to another component).
The period of an irreducible graph (with at least one edge)
is the greatest common divisor of the lengths of its cycles,
and such a graph is primitive if its period is~$1$.
A constraint~$S$ is irreducible
if it can be presented by a deterministic irreducible graph
(in which case the smallest such presentation is unique).

The power~$G^t$ of a graph~$G$ is
the graph with the same set of states~$V(G)$
and edges that are the paths of length~$t$ in~$G$; the label 
of an edge in~$G^t$ is the length-$t$ word generated by the path.
For $S = S(G)$ the power~$S^t$ is defined as~$S(G^t)$.

Given a constraint~$S$ over an alphabet~$\Sigma$
and a lossless presentation~$G$ of~$S$,
the capacity of~$S$ is defined by
$\capacity(S) =
\lim_{\ell \rightarrow \infty} (1/\ell) \log_2 |S \cap \Sigma^\ell|$.
The limit indeed exists, and it is known that
$\capacity(S) = \log_2 \lambda(A_G)$ where~$\lambda(A_G)$ denotes
the spectral radius (Perron eigenvalue) of the adjacency matrix~$A_G$.

Given a constraint~$S$ and a
nonnegative\footnote{%
In all practical cases~$n$ needs to be strictly positive.
Yet for the purpose of simplifying the wording of some results in
the sequel, we find it convenient to allow~$n$ formally to be zero.}
integer~$n$, an $(S,n)$-encoder is a lossless graph~$\encoder$
such that $S(\encoder) \subseteq S$ and
each state has out-degree~$n$.
An $(S,n)$-encoder exists
if and only if~$\log_2 n \le \capacity(S)$.
In a \emph{tagged} $(S,n)$-encoder,
each edge is assigned an input tag from a finite alphabet of size~$n$,
such that edges outgoing from the same state have distinct tags.
The anticipation (if finite) of an encoder determines its decoding
delay: given the current encoder state, the anticipation specifies
how far one needs to look-ahead at a sequence generated from that state
in order to recover the first edge (and, thus, the tag of that edge)
in the encoder path that generated that sequence.
A tagged encoder is
$(m,a)$-sliding-block decodable if all paths that generate
a given word of length $m{+}a{+}1$ share the same tag
on their $(m{+}1)$st edges (thus, an $(m,a)$-definite encoder is
$(m,a)$-sliding-block decodable for any tagging of its edges).

A (tagged) rate $p:q$ encoder for a constraint~$S$ is
a (tagged) $(S^q,2^p)$-encoder (the tags are then assumed to be from
$\{ 0, 1 \}^p$). A rate $p:q$ parity-preserving encoder for
a constraint~$S$ over $\Sigma = \{ 0, 1 \}$ is 
a tagged encoder for~$S$ in which
the parity of the (length-$q$) label of each edge matches
the parity of the (length-$p$) tag that is assigned to
the edge (see also Section~\ref{sec:ppencoders} below).

Given a square nonnegative integer matrix~$A$ and a positive
integer~$n$, an $(A,n)$-approximate eigenvector
is a nonnegative nonzero integer vector~$\bldx$ that satisfies
the inequality $A \bldx \ge n \bldx$ componentwise.
The set of all $(A,n)$-approximate eigenvectors will be denoted by
$\XX(A,n)$, and it is known that
$\XX(A,n) \ne \emptyset$ if and only if $n \le \lambda(A)$.
Given a constraint~$S$ presented by a deterministic graph~$G$ and
a positive integer~$n$, the state-splitting algorithm provides
a method for transforming $G$, through
an $(A_G,n)$-approximate eigenvector, into
an $(S,n)$-encoder with finite anticipation.

For a positive integer~$b$,
the set $\{ 0, 1, 2, \ldots, b{-}1 \}$
will be denoted by~$\Int{b}$.

\subsection{Bi-modal encoders and parity-preserving encoders}
\label{sec:ppencoders}

Let $S$ be a constraint over an alphabet~$\Sigma$, and fix
a partition $\{ \Sigma_0, \Sigma_1 \}$ of $\Sigma$.
The symbols in~$\Sigma_0$ (\resp, $\Sigma_1$)
will be referred to as
the \emph{even} (\resp, \emph{odd}) symbols of $\Sigma$.
The case where one of the partition elements is empty will
turn out to be uninterestingly trivial, so we assume that~$\Sigma_0$
and~$\Sigma_1$ are both nonempty.
The partition of~$\Sigma$ to two elements (only)
and the choice of the terms even and odd follow from
the above-mentioned parity-based DC control approach
of constructing bi-modal encoders where an input 
$p$-bit block is mapped into two $q$-bit codewords
with different parities.
However, without much further effort, the definitions
and the results can be extended to
the multi-modal case, where~$\Sigma$ is partitioned into any
number of partition elements.\footnote{\label{footnote:disparity}%
Moreover, we will also consider
in Section~\ref{sec:non-partition}
the extension where~$\Sigma_0$ and~$\Sigma_1$ are not necessarily
disjoint (yet still cover all the elements of~$\Sigma$),
as this extension fits better the disparity-based DC control scheme:
in this case,
$\Sigma_0$ and~$\Sigma_1$ represent sets of bipolar $q$-blocks
of non-negative and non-positive disparities, respectively
and, so, they may intersect on $q$-blocks with zero disparity.}

Given a graph~$H$ with labeling in~$\Sigma$,
for $\parity \in \Int{2}$,
we denote by~$H_\parity$ the subgraph of~$H$ containing only the edges
with labels in~$\Sigma_\parity$.

Let $S = S(G)$ be a constraint and~$n_0$ and~$n_1$ be nonnegative
integers. An \emph{$(S,n_0,n_1)$-encoder} $\encoder$ is
an $(S,n_0{+}n_1)$-encoder such that for each~$\parity \in \Int{2}$,
the subgraph~$\encoder_\parity$ is an~$(S,n_\parity)$-encoder.
In other words, from each state in~$\encoder$,
there are~$n_0$ outgoing edges with even labels and~$n_1$ outgoing edges
with odd labels. Thus, in the parity-based DC control scheme,
a rate $p:q$ bi-modal encoder
for a constraint~$S$ over the binary alphabet
is an $(S^q,2^{p-1},2^{p-1})$-encoder,
where we partition the alphabet $S \cap \{ 0, 1 \}^q$ of $S^q$
into even and odd symbols according to their ordinary parity.
In particular, a rate $p:q$ parity-preserving
encoder for~$S$ is a (tagged) $(S^q,2^{p-1},2^{p-1})$-encoder and,
conversely, any $(S^q,2^{p-1},2^{p-1})$-encoder can be tagged so that
it is parity preserving. In the disparity-based DC control scheme,
the constraint~$S$ is over the bipolar alphabet
and we partition the set of length-$q$ words of~$S$
according to the sign of their disparity 
(zero-disparity words can be treated arbitrarily,
but see Footnote~\ref{footnote:disparity}).
In general, by a rate $p:q$ bi-modal encoder for a constraint~$S$
we hereafter mean
an $(S^q,2^{p-1},2^{p-1})$-encoder with respect to a prescribed
partition of the set of length-$q$ words of~$S$.

\begin{example}
\label{ex:paritypreserving}
Let~$S$ be the constraint over
$\Sigma = \{ a, b, c, d \}$ 
which is presented by the graph~$\encoder$ in
Figure~\ref{fig:paritypreserving}.
The graph~$\encoder$ is actually a deterministic
$(S,4)$-encoder.
Moreover, it is an $(S,2,2)$ encoder
if we assume the partition $\Sigma_0 = \{ a, b \}$ and
$\Sigma_1 = \{ c, d \}$:
the $(S,2)$-encoders $\encoder_0$ and $\encoder_1$ are
shown in Figures~\ref{fig:paritypreserving0}
and~\ref{fig:paritypreserving1}, respectively.\qed
\begin{figure}[t!]
\begin{center}
\thicklines
\setlength{\unitlength}{0.48mm}
\begin{picture}(150,60)(-35,-20)
    \multiput(000,000)(060,000){2}{\circle{20}}
    \qbezier(051,4.5)(030,14.5)(009,4.5)
    \put(051,4.5){\vector(2,-1){0}}
    \qbezier(054,8.3)(030,28.8)(006,8.3)
    \put(054,8.3){\vector(5,-4){0}}
    \qbezier(058,9.9)(030,48.0)(002,9.9)
    \put(058,9.9){\vector(2,-3){0}}
    \qbezier(051,-4.5)(030,-14.5)(009,-4.5)
    \put(009,-4.5){\vector(-2,1){0}}

    \qbezier(-9.26,004)(-30,015)(-30,000)
    \qbezier(-9.26,-04)(-30,-15)(-30,000)
    \put(-9.26,004){\vector(2,-1){0}}

    \qbezier(69.26,004)(090,015)(090,000)
    \qbezier(69.26,-04)(090,-15)(090,000)
    \put(69.26,004){\vector(-2,-1){0}}
    \qbezier(67.14,007)(100,025)(100,000)
    \qbezier(67.14,-07)(100,-25)(100,000)
    \put(67.14,007){\vector(-3,-2){0}}
    \qbezier(64.4,009)(110,035)(110,000)
    \qbezier(64.4,-09)(110,-35)(110,000)
    \put(64.4,009){\vector(-3,-2){0}}

\put(000,000){\makebox(0,0){$\alpha$}}
\put(060,000){\makebox(0,0){$\beta$}}

\put(-32,000){\makebox(0,0)[r]{$a$}}
\put(030,031.0){\makebox(0,0)[b]{$b$}}
\put(030,020.8){\makebox(0,0)[b]{$c$}}
\put(030,011.0){\makebox(0,0)[b]{$d$}}
\put(030,-11.5){\makebox(0,0)[t]{$d$}}
\put(092,000){\makebox(0,0)[l]{$a$}}
\put(102,001){\makebox(0,0)[l]{$b$}}
\put(112,000){\makebox(0,0)[l]{$c$}}
\end{picture}
\thinlines
\setlength{\unitlength}{1pt}
\end{center}
\caption{Graph~$\encoder$ for
Example~\protect\ref{ex:paritypreserving}.}
\label{fig:paritypreserving}
\end{figure}
\begin{figure}[h!t]
\begin{center}
\thicklines
\setlength{\unitlength}{0.48mm}
\begin{picture}(150,32)(-35,-16)
    \multiput(000,000)(060,000){2}{\circle{20}}
    \put(010,000){\vector(1,0){40}}

    \qbezier(-9.26,004)(-30,015)(-30,000)
    \qbezier(-9.26,-04)(-30,-15)(-30,000)
    \put(-9.26,004){\vector(2,-1){0}}

    \qbezier(69.26,004)(090,015)(090,000)
    \qbezier(69.26,-04)(090,-15)(090,000)
    \put(69.26,004){\vector(-2,-1){0}}
    \qbezier(67.14,007)(100,025)(100,000)
    \qbezier(67.14,-07)(100,-25)(100,000)
    \put(67.14,007){\vector(-3,-2){0}}

\put(000,000){\makebox(0,0){$\alpha$}}
\put(060,000){\makebox(0,0){$\beta$}}

\put(-32,000){\makebox(0,0)[r]{$a$}}
\put(030,002){\makebox(0,0)[b]{$b$}}
\put(092,000){\makebox(0,0)[l]{$a$}}
\put(102,001){\makebox(0,0)[l]{$b$}}
\end{picture}
\thinlines
\setlength{\unitlength}{1pt}
\end{center}
\caption{Graph~$\encoder_0$ for
Example~\protect\ref{ex:paritypreserving}.}
\label{fig:paritypreserving0}
\end{figure}
\begin{figure}[h!t]
\begin{center}
\thicklines
\setlength{\unitlength}{0.48mm}
\begin{picture}(150,45)(-35,-15)
    \multiput(000,000)(060,000){2}{\circle{20}}
    \qbezier(051,4.5)(030,14.5)(009,4.5)
    \put(051,4.5){\vector(2,-1){0}}
    \qbezier(054,8.3)(030,28.8)(006,8.3)
    \put(054,8.3){\vector(5,-4){0}}
    \qbezier(051,-4.5)(030,-14.5)(009,-4.5)
    \put(009,-4.5){\vector(-2,1){0}}

    \qbezier(69.26,004)(090,015)(090,000)
    \qbezier(69.26,-04)(090,-15)(090,000)
    \put(69.26,004){\vector(-2,-1){0}}

\put(000,000){\makebox(0,0){$\alpha$}}
\put(060,000){\makebox(0,0){$\beta$}}

\put(030,020.8){\makebox(0,0)[b]{$c$}}
\put(030,011.0){\makebox(0,0)[b]{$d$}}
\put(030,-11.5){\makebox(0,0)[t]{$d$}}
\put(092,000){\makebox(0,0)[l]{$c$}}
\end{picture}
\thinlines
\setlength{\unitlength}{1pt}
\end{center}
\caption{Graph~$\encoder_1$ for
Example~\protect\ref{ex:paritypreserving}.}
\label{fig:paritypreserving1}
\end{figure}
\end{example}

When studying $(S(G),n_0,n_1)$-encoders, there is no loss of
generality in assuming that both~$G$ and the encoder are
irreducible. Indeed, if~$\encoder$ is an $(S(G),n_0,n_1)$-encoder,
then an irreducible sink of~$\encoder$ is
an $(S',n_0,n_1)$-encoder, where~$S'$ is an irreducible constraint
presented by some irreducible component of~$G$
(see~\cite[Lemma~2.9]{MRS}).
Note that~$G_0$, $G_1$, $\encoder_0$, and~$\encoder_1$ may still
be reducible even when~$G$ and~$\encoder$ are irreducible
(e.g., in Example~\ref{ex:paritypreserving},
the graph~$\encoder$ is irreducible, while~$\encoder_0$ is not).

\subsection{Statement of main result}
\label{sec:mainresult}

Next is a statement of our main result.

\begin{theorem}
\label{thm:main}
Let $S$ be an irreducible constraint, presented by
an irreducible deterministic graph $G$, and let~$n_0$ and~$n_1$ be
positive integers.
Then there exists an $(S,n_0,n_1)$-encoder if and only if
$\XX(A_{G_0},n_0) \cap \XX(A_{G_1},n_1) \ne \emptyset$.
\end{theorem}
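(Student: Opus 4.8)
The plan is to prove the two implications separately. Write $S = S(G)$ and, for $\parity \in \Int{2}$, put $A_\parity = A_{G_\parity}$; since $\{\Sigma_0,\Sigma_1\}$ partitions $\Sigma$ we have $A_G = A_0 + A_1$, so any vector in $\XX(A_0,n_0)\cap\XX(A_1,n_1)$ automatically lies in $\XX(A_G,n_0{+}n_1)$. Using the reduction noted after Example~\ref{ex:paritypreserving}, I will assume throughout that $G$ and the encoder are irreducible (an approximate eigenvector for an irreducible component of $G$ is extended by zero on the remaining states, which preserves both defining inequalities).

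\textbf{Necessity.} Let $\encoder$ be an $(S,n_0,n_1)$-encoder and form the labelled fibre product $\Pi$ of $\encoder$ with the deterministic graph $G$: its vertices are the pairs $(v,q)$ with $v\in V(\encoder)$ and $q\in V(G)$, and it has an edge from $(v,q)$ to $(v',q')$ on a label $a$ whenever $\encoder$ has an edge from $v$ to $v'$ on $a$ and $G$ has an edge from $q$ to $q'$ on $a$. Call $(v,q)$ \emph{admissible} if $\FF_\encoder(v)\subseteq\FF_G(q)$; since $G$ is deterministic, admissibility is preserved along edges of $\Pi$, and an admissible vertex inherits exactly the out-edges of its $\encoder$-component, hence has out-degree $n_0{+}n_1$, of which $n_\parity$ carry labels in $\Sigma_\parity$. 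One needs to verify, using the hypothesis $S(\encoder)\subseteq S$, that some admissible pair $(v_0,q_0)$ exists; let $C$ be the subgraph of $\Pi$ reachable from it, so every vertex of $C$ is admissible and $C$ is closed under out-edges. Define $\bldx\in\Integers^{V(G)}$ by $x_q = 1$ if $q$ is the second coordinate of some vertex of $C$ and $x_q = 0$ otherwise — a nonnegative, nonzero vector. For $q$ with $x_q = 1$, choose $(v,q)\in C$; its $n_\parity$ out-edges labelled in $\Sigma_\parity$ project, the coordinate $q$ being fixed and $G$ deterministic, to $\Sigma_\parity$-labelled out-edges of $q$ in $G$ whose heads again appear in $C$, giving $(A_\parity\bldx)_q \ge n_\parity = n_\parity x_q$ — \emph{provided those $n_\parity$ labels are pairwise distinct}, so that they index $n_\parity$ distinct edges of $G$. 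This proviso is the crux: a general lossless encoder may have a vertex carrying two out-edges with the same label (to different vertices), and then the count at its $G$-image falls below $n_\parity$. I would dispose of it by a preliminary normalization — replacing $\encoder$ by an $(S,n_0,n_1)$-encoder with no repeated out-labels, i.e.\ a deterministic one — and establishing that this is always possible (since $S(\encoder_\parity)$ has capacity exactly $\log_2 n_\parity$, determinizing should not cost out-degree), together with pinning down the existence of the admissible pair above, is the main obstacle of this direction.

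\textbf{Sufficiency.} Given $\bldx\in\XX(A_0,n_0)\cap\XX(A_1,n_1)$, I would run a state-splitting construction on $G$ guided by $\bldx$, but regarding $G$ as the two subgraphs $G_0,G_1$ over a shared vertex set and performing each round of out-splitting simultaneously and \emph{parity-compatibly}: when a vertex $q$ of weight $x_q$ is split into descendants of weights $w_1,\dots,w_r$ summing to $x_q$, its $\Sigma_0$-out-edges and its $\Sigma_1$-out-edges are distributed among the descendants so that the $i$-th descendant receives $\Sigma_0$-out-edges of total head-weight $\ge n_0 w_i$ and $\Sigma_1$-out-edges of total head-weight $\ge n_1 w_i$. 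That such a doubly-good splitting exists whenever $q$ carries too many edges is the two-parity analogue of the classical out-splitting lemma and should follow, by a simultaneous version of its greedy proof, from $A_0\bldx \ge n_0\bldx$ and $A_1\bldx \ge n_1\bldx$; iterating decreases an appropriate combined potential (built from the deficiencies $d_\parity(q) - n_\parity$ weighted by $x_q$, with $d_\parity(q)$ the $\Sigma_\parity$-out-degree of $q$) until every vertex has at least $n_\parity$ out-edges labelled in $\Sigma_\parity$, after which excess edges are deleted to leave exactly $n_0{+}n_1$ out-edges, $n_\parity$ of them in $\Sigma_\parity$ — an $(S,n_0,n_1)$-encoder, with losslessness and finite anticipation inherited from the state-splitting as in the classical case. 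The stronger finite-decoding-delay statement would follow by imposing the sharper hypotheses of Section~\ref{sec:sufficiency} on $\bldx$, under which the simultaneous splitting can also be steered to make each $\encoder_\parity$ of finite anticipation, yielding a sliding-block decodable (parity-preserving) encoder.
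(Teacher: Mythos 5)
Both directions of your proposal contain genuine gaps, and in each case the gap is exactly the point where the paper's argument has to work hardest.

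\textbf{Necessity.} Your fibre-product construction produces a $0\textrm{--}1$ vector $\bldx$, and this already proves too much: by the observation in Section~\ref{sec:anticipationzero}, a $0\textrm{--}1$ vector in $\XX(A_{G_0},A_{G_1},n_0,n_1)$ yields a \emph{deterministic} $(S,n_0,n_1)$-encoder as a subgraph of~$G$, so your argument would show that whenever any $(S,n_0,n_1)$-encoder exists a deterministic one does too. That is false, as Corollary~\ref{cor:lowerboundstates} and Example~\ref{ex:210rll} (where every vector in the intersection has an entry equal to~$2$) show. You correctly isolate the crux --- a lossless encoder may have two out-edges from one state with the same label, so the projection to $G$ undercounts --- but your proposed repair, determinizing $\encoder$ ``without cost in out-degree,'' is impossible in general: the subset construction changes out-degrees, and capacity arguments only give encoders with finite anticipation, not deterministic ones. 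The paper's proof resolves precisely this issue by determinizing $\encoder$ into $H$, weighting each subset-state $Z$ by $c_Z = |Z|$ (losslessness gives $|E_a| = |Z_a|$, hence $A_{H_\parity'}\bldc = n_\parity\bldc$ exactly), and then transferring these weights to $V(G)$ by $x_u = \max\{c_Z : \FF_{H'}(Z)\subseteq\FF_G(u)\}$; the resulting vector has entries that can exceed~$1$, which is unavoidable.

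\textbf{Sufficiency.} The ``two-parity analogue of the out-splitting lemma'' that your construction rests on is false. Example~\ref{ex:doesnotsplit} exhibits $G_0,G_1$ with a common exact eigenvector $\bldx = (1\;2\;3)^\top$ for $n_0=n_1=2$ such that $G_1$ admits \emph{no} $\bldx$-consistent splitting; the difficulty is that $G_0$ and $G_1$ are typically reducible even when $G$ is irreducible, so the classical existence argument for a nontrivial consistent partition breaks down. Moreover, Appendix~\ref{sec:limitations} gives an example where both subgraphs do split completely, yet every matching of the descendant states of $\encoder_0$ with those of $\encoder_1$ produces an encoder with infinite anticipation --- so even when your splitting goes through, finite anticipation is not ``inherited as in the classical case.'' The paper's sufficiency proof abandons state splitting altogether: from any $\bldx$ in the intersection it partitions each set $\Delta_\parity(u)$ of (label, descendant-index) pairs into $x_u$ blocks of size $n_\parity$ and builds the encoder directly on states $(u,i)$ (Proposition~\ref{prop:AGW}); losslessness follows from the disjointness of the blocks, and losslessness is all the definition of an encoder demands. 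Finite anticipation is obtained only under the strictly stronger hypothesis $\XX(A_{G_0},A_{G_1},n_0{+}1,n_1{+}1)\ne\emptyset$ via punctured stethering (Theorem~\ref{thm:stethering}), and whether it can always be guaranteed is stated as an open problem --- so your final claim that the finite-delay statement ``would follow'' under the section's hypotheses overstates what is known.
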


We prove Theorem~\ref{thm:main} in Sections~\ref{sec:necessity}
(necessity) and~\ref{sec:sufficiency} (sufficiency).
Hereafter, we use the notation
$\XX(A_{G_0},A_{G_1},n_0,n_1)$ for the intersection
$\XX(A_{G_0},n_0) \cap \XX(A_{G_1},n_1)$.

In view of Theorem~\ref{thm:main}, finding
the possible pairs~$(n_0,n_1)$
for which an~$(S(G),n_0,n_1)$-encoder exists
for a given~$G$ and partition~$\{ \Sigma_0, \Sigma_1 \}$ requires
a method for deciding whether
$\XX(A_{G_0},n_0)$ and $\XX(A_{G_1},n_1)$ share common vectors.
This decision problem can be recast
as a linear programming problem, namely, deciding whether
there is a \emph{real} vector~$\bldx$ that satisfies
the following constraints:
\begin{equation}
\label{eq:LP}
\begin{array}{rcl}
(A_{G_0} - n_0 I) \, \bldx & \ge & \bldzero \\
(A_{G_1} - n_1 I) \, \bldx & \ge & \bldzero \\
\bldx & \ge & \bldzero \\
\bldone^\top \cdot \bldx & = & 1 \; ,
\end{array}
\end{equation}
where~$\bldzero$ and~$\bldone$ stand for
the all-zero and all-$1$ column vectors
and $(\cdot)^\top$ denotes transposition.
Since all the coefficients in~(\ref{eq:LP}) are integers,
if there is a real
feasible solution~$\bldx$ then there is also a rational solution,
and, therefore, there is a nonzero integer solution
that satisfies the (first) three inequalities in~(\ref{eq:LP}).
There are known
polynomial-time algorithms for solving linear programming
problems, such as Karmarkar's algorithm~\cite{LY},
but it would be interesting to find a more direct method,
tailored specifically to the constraints~(\ref{eq:LP}),
for determining whether
$\XX(A_{G_0},A_{G_1},n_0,n_1)$ is nonempty
(see also the modified Franaszek algorithm in
Figure~\ref{fig:fz} in the sequel). 
In comparison, recall that in
the context of ordinary $(S,n)$-encoders,
the question of interest is whether $\XX(A_G,n)$ is nonempty,
which, in turn, is equivalent to asking whether $n \le \lambda(A_G)$.

\subsection{Going to powers of the constraint}
\label{sec:powers}

Next, we discuss the effect of going to powers of a constraint,
namely, attempting to construct $(S^t, n_0,n_1)$-encoders,
for increasing values of~$t$. To this end, we first need to define
the even and odd symbols in $\Sigma^t$, which is the alphabet\footnote{%
The \emph{effective} alphabet of $S^t$ is the subset
$S \cap \Sigma^t$ of $\Sigma^t$.}
of $S^t$, given a partition $\{ \Sigma_0, \Sigma_1 \}$ of~$\Sigma$.
We adopt a definition that fits the parity-based scheme for DC control:
we say that $\bldw \in \Sigma^t$ is even
(\resp, odd), if it contains an even (\resp, odd)
number of symbols from~$\Sigma_1$ (i.e., the parity
of~$\bldw$ is the modulo-$2$ sum of the parities of
the symbols in~$\bldw$).
The set of even (\resp, odd) words in~$\Sigma^t$ will be
denoted by~$(\Sigma^t)_0$ (\resp, $(\Sigma^t)_1$).

It turns out that in most cases,
we can approach the capacity of~$S$ with bi-modal encoders
if we let~$t$ increase. Note, however, that such an increase
may sometimes be \emph{necessary},
even when the capacity of $S$ is $\log_2 (n_0{+}n_1)$
(see Example~\ref{ex:twostates} below).
This presents a distinction between
bi-modal encoders and ordinary ones:
when $\capacity(S) = \log_2 n$, capacity is always attained
by ordinary encoders already for~$t = 1$.

Specifically, we have the following result.

\begin{theorem}
\label{thm:primitive}
Let~$G$ be a deterministic primitive graph,
having at least one edge with an even label
and one edge with an odd label.
Then there exists an infinite sequence of nonnegative integers
$n^{(1)}, n^{(2)}, \cdots$
such that $(S(G^t),n^{(t)},n^{(t)})$-encoders exist and
\[
\lim_{t \rightarrow \infty} (\log_2 n^{(t)})/t = \capacity(S(G)) \; .
\]
\end{theorem}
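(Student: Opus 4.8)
The plan is to route the existence question through Theorem~\ref{thm:main} and then feed it the Perron eigenvector of $A_G$. Since $G$ is primitive, every power $G^t$ is a deterministic irreducible presentation of the irreducible constraint $S(G^t)$, so Theorem~\ref{thm:main} says that an $(S(G^t),n,n)$-encoder exists iff $\XX(A_{(G^t)_0},A_{(G^t)_1},n,n)\neq\emptyset$. The first step is an algebraic description of the two adjacency matrices involved: let $B=B_G$ be the integer matrix whose $(u,v)$ entry equals the number of even-labeled edges from $u$ to $v$ minus the number of odd-labeled edges from $u$ to $v$. Since the parity of a path is the modulo-$2$ sum of the parities of its edges, summing over length-$t$ paths with weight $(-1)^{\mathrm{parity}}$ yields $(B_G)^t$; writing $A=A_G$ we therefore get $A_{(G^t)_0}=\frac{1}{2}(A^t+B^t)$ and $A_{(G^t)_1}=\frac{1}{2}(A^t-B^t)$ (note $A_{(G^t)_0}+A_{(G^t)_1}=A^t$, as it must). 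Hence, for a nonnegative vector $\bldx$, the conjunction ``$A_{(G^t)_0}\bldx\ge n\bldx$ and $A_{(G^t)_1}\bldx\ge n\bldx$'' is equivalent to $A^t\bldx\pm B^t\bldx\ge 2n\bldx$, i.e.\ to $A^t\bldx-|B^t\bldx|\ge 2n\bldx$ with the absolute value taken entrywise.

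Next, fix the right and left Perron eigenvectors $\bldx>\bldzero$ and $\bldy>\bldzero$ of the primitive matrix $A$, normalized by $\bldy^\top\bldx=1$, and put $\lambda=\lambda(A_G)=2^{\capacity(S(G))}$; here $\lambda>1$, since a primitive graph with $\lambda=1$ is a single self-loop and cannot carry edges of two different parities. For this fixed $\bldx$ we have $A^t\bldx=\lambda^t\bldx$, so the condition above holds precisely when $2n\le\lambda^t-f(t)$, where $f(t):=\max_u|(B^t\bldx)_u|/x_u$. So, whenever $f(t)<\lambda^t$, set $n^{(t)}:=\lfloor\frac{1}{2}(\lambda^t-f(t))\rfloor$; for the finitely many remaining (small) $t$ take $n^{(t)}:=0$, for which the one-state edgeless graph is trivially an $(S(G^t),0,0)$-encoder and which does not affect the limit. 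For the other $t$, $\bldx$ is a real common approximate eigenvector at level $n^{(t)}$, and since the inequalities defining $\XX(A_{(G^t)_0},A_{(G^t)_1},n^{(t)},n^{(t)})$ have integer coefficients, the polyhedral argument used after~(\ref{eq:LP}) yields a genuine nonzero \emph{integer} element of that set; Theorem~\ref{thm:main} then supplies the encoder (for $t$ large, $n^{(t)}\ge1$, so its positivity hypothesis is met). Finally $n^{(t)}\le\frac{1}{2}\lambda^t$ always; so if we prove that $f(t)\le c\,\lambda^t$ for some constant $c<1$ and all large $t$, then $n^{(t)}\ge\frac{1}{2}(1-c)\lambda^t-1\to\infty$ and $(\log_2 n^{(t)})/t\to\log_2\lambda=\capacity(S(G))$, which is exactly the assertion of Theorem~\ref{thm:primitive}.

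Everything thus comes down to the estimate $f(t)\le c\,\lambda^t$ with $c<1$, and this is where primitivity of $G$ and the presence of an odd (and an even) label enter. Since $|B_G|\le A_G$ entrywise with $A_G$ primitive, classical Perron--Frobenius theory (Wielandt's bound) leaves two cases. Either $\lambda(B)<\lambda$, and then $\|B^t\|\le C\rho^t$ for some $\rho\in(\lambda(B),\lambda)$, so $f(t)\le C'\rho^t\le\frac{1}{2}\lambda^t$ for large $t$. Or $\lambda(B)=\lambda$, and then $B=\varepsilon\,D A_G D$ for some $\varepsilon\in\{+1,-1\}$ and some $\pm1$ diagonal matrix $D$ (for a real integer $B$ dominated by a primitive matrix the phase is forced to $\pm1$ and $D$ may be taken real). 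In that case $B^t=\varepsilon^t D A^t D$, so $|(B^t\bldx)_u|=|(A^t(D\bldx))_u|$, and the exponentially fast convergence $A^t/\lambda^t\to\bldx\bldy^\top$ for primitive $A$ gives $|(B^t\bldx)_u|\le|\gamma|\lambda^t x_u+C''\rho^t$ for a suitable $\rho<\lambda$, where $\gamma=\bldy^\top D\bldx$; hence $f(t)\le|\gamma|\lambda^t+C'''\rho^t$. It only remains to note that $|\gamma|=\bigl|\sum_u d_u\,x_uy_u\bigr|\le\sum_u x_uy_u=1$, with equality only if all $d_u$ agree, i.e.\ only if $D=\pm I$, i.e.\ only if every edge of $G$ has the same parity --- which the hypothesis excludes. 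So $|\gamma|<1$ and $c=\frac{1}{2}(1+|\gamma|)<1$ works for large $t$.

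I expect this dichotomy to be the only real obstacle: the work lies in handling the exceptional case $\lambda(B)=\lambda$ and checking there that having labels of both parities forces the strict inequality $|\gamma|<1$; outside that case the decay of $f(t)$ is immediate from the spectral gap. The remaining ingredients --- the matrix identity for $A_{(G^t)_0},A_{(G^t)_1}$, the passage from a real to an integer common approximate eigenvector, and the limit computation --- are routine bookkeeping.
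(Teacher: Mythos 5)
Your proof is correct, but it follows a genuinely different route from the paper's. The paper argues purely combinatorially: primitivity gives $(A_G^\ell)_{u,v} > \kappa\lambda^\ell$ for all large $\ell$ and all pairs of states; one first produces two paths of a common length $\ell^*$ from a fixed state $u^*$ with opposite parities (using the even and odd edges), and then from every state routes at least $n^{(t)}=\lceil \kappa\lambda^{t-\ell^*}\rceil$ length-$(t{-}\ell^*)$ paths to $u^*$ and appends either parity-completing suffix, so that the encoder is exhibited directly as a deterministic subgraph of $G^t$ --- no appeal to Theorem~\ref{thm:main} at all. Your route instead goes through Theorem~\ref{thm:main} together with a spectral analysis of the signed matrix $B$: the identities $A_{(G^t)_\parity}=\tfrac12(A^t\pm B^t)$ are correct (they reproduce the matrices computed in Examples~\ref{ex:twostates} and~\ref{ex:twostatesalt}), and the Wielandt dichotomy correctly isolates the only dangerous case $\rho(B)=\lambda$, where the presence of both parities is exactly what rules out $D=\pm I$ and hence forces $|\gamma|<1$. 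What your approach buys is a much sharper $n^{(t)}$ --- a constant fraction of $\lambda^t/2$, matching $n_{\max}(G,t)$ to first order in both examples (e.g.\ $\gamma=1/3$ recovers $n^{(t)}\sim 2^t/3$ in Example~\ref{ex:twostatesalt}) --- whereas the paper's $\kappa\lambda^{t-\ell^*}$ can be off by a large constant factor. What it costs is the reliance on the sufficiency half of Theorem~\ref{thm:main} (so the encoder comes from the stethering construction rather than being a deterministic subgraph of $G^t$) and on the equality case of Wielandt's theorem; in a full write-up you should justify that the phase and the diagonal $D$ can be taken real, which uses primitivity (the gcd of the cycle lengths is $1$, and the product of the edge signs $e^{i\varphi}d_ud_v^{-1}=\pm1$ around cycles of coprime lengths forces $e^{i\varphi}=\pm1$). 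The remaining steps --- passing from the real Perron vector to an integer common approximate eigenvector via the rationality argument following~(\ref{eq:LP}), and setting $n^{(t)}=0$ for the finitely many small $t$ --- are sound.
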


\begin{proof}
By assumption, there are edges
$e_0 = v_0 \stackrel{a}{\rightarrow} v'_0$
and
$e_1 = v_1 \stackrel{b}{\rightarrow} v'_1$
in~$G$ such that~$a$ is even and~$b$ is odd
($v_0$ and~$v_1$ may be the same state, and so may~$v'_0$ and~$v'_1$).
Also, since~$G$ is primitive, there exists a real $\kappa > 0$
such that $(A_G^\ell)_{u,v} > \kappa \cdot \lambda^\ell$
for any sufficiently large~$\ell$ and for every $u, v \in V(G)$,
where $\lambda = \lambda(A_G)$ (see~\cite[\S 3.3.4]{MRS}).

Fix~$u^*$ to be some $G$-state. We show that there are two
paths in~$G$ of the same length~$\ell^*$ that start at~$u^*$
and have different parities. Let~$\ell$ 
be sufficiently large such that $A_G^\ell > 0$ (componentwise),
and consider all paths of length~$\ell$ starting at~$u^*$.
If two of them have different parities, we are done,
with $\ell^*$ taken as~$\ell$.
Otherwise, for $\parity \in \Int{2}$, let $\pi_\parity$ be a path
of length~$\ell$ in~$G$ that starts at~$u^*$ and
terminates in~$v_\parity$.
The paths $\pi_0 e_0$ and~$\pi_1 e_1$, of length~$\ell^* = \ell{+}1$,
then have different parities.

Now, if $t{-}\ell^*$ is sufficiently large, there are at least
$n^{(t)} = \lceil \kappa \cdot \lambda^{t-\ell^*} \rceil$
paths of length~$t{-}\ell^*$ from any state~$u \in V(G)$ to~$u^*$.
It follows that from any state~$u \in V(G)$, there are at least
$n^{(t)}$ paths of length~$t$ that generate even words,
and at least $n^{(t)}$ paths of that length that generate odd words.
This means that~$G^t$ contains a subgraph which is
an $(S(G^t),n^{(t)},n^{(t)})$-encoder.
The result follows by noticing that
$\lim_{t \rightarrow \infty} (\log_2 n^{(t)})/t = \log_2 \lambda =
\capacity(S(G))$.
\end{proof}

\begin{remark}
\label{rem:primitive}
The requirement in Theorem~\ref{thm:primitive}
of having both even and odd labels
in~$G$ is necessary. Indeed, if all the labels in~$G$ were even,
then so would have to be all the labels in the encoder, implying
that $n^{(t)} = 0$. Similarly, if all the labels in~$G$ were odd,
then all the labels in the encoder would have to be even
(\resp, odd) for even (\resp, odd) $t$,
again implying that $n^{(t)} = 0$.\qed
\end{remark}

For a deterministic graph~$G$ and a positive integer~$t$,
we denote by $n_{\max}(G,t)$ the largest integer~$n$ for which there
exist $(S(G^t),n,n)$-encoders, and define the largest possible
coding ratio attainable by such encoders by
\begin{equation}
\label{eq:codingratio}
\ratio(G,t) = 
\frac{\log_2 \, (2 \, n_{\max}(G,t))}{t}
\; .
\end{equation}

\begin{example}
\label{ex:twostates}
Let~$G$ be the graph in Figure~\ref{fig:twostates}, where
\begin{equation}
\label{eq:partitioning}
\Sigma_0 = \{ a, b \} \quad \textrm{and} \quad
\Sigma_1 = \{ c, d \} \; .
\end{equation}
\begin{figure}[hbt]
\begin{center}
\thicklines
\setlength{\unitlength}{0.48mm}
\begin{picture}(105,45)(-35,-15)
    \multiput(000,000)(060,000){2}{\circle{20}}
    \qbezier(051,4.5)(030,14.5)(009,4.5)
    \put(051,4.5){\vector(2,-1){0}}
    \qbezier(054,8.3)(030,28.8)(006,8.3)
    \put(054,8.3){\vector(5,-4){0}}
    \qbezier(051,-4.5)(030,-14.5)(009,-4.5)
    \put(009,-4.5){\vector(-2,1){0}}

    \qbezier(-9.26,004)(-30,015)(-30,000)
    \qbezier(-9.26,-04)(-30,-15)(-30,000)
    \put(-9.26,004){\vector(2,-1){0}}

\put(000,000){\makebox(0,0){$\alpha$}}
\put(060,000){\makebox(0,0){$\beta$}}

\put(-32,000){\makebox(0,0)[r]{$a$}}
\put(030,020.8){\makebox(0,0)[b]{$b$}}
\put(030,011.0){\makebox(0,0)[b]{$c$}}
\put(030,-11.5){\makebox(0,0)[t]{$d$}}
\end{picture}
\thinlines
\setlength{\unitlength}{1pt}
\end{center}
\caption{Graph~$G$ for Example~\protect\ref{ex:twostates}.}
\label{fig:twostates}
\end{figure}
We have $\lambda(A_G) = 2$,
and the matrices $A_{G_0}$ and $A_{G_1}$ are given by
\[
A_{G_0} =
\left(
\begin{array}{cc}
1 & 1 \\
0 & 0
\end{array}
\right)
\quad \textrm{and} \quad
A_{G_1} =
\left(
\begin{array}{cc}
0 & 1 \\
1 & 0
\end{array}
\right)
\; .
\]
Letting $S = S(G)$,
we find conditions under which there exist
$(S^t,n,n)$-encoders, for various values of~$t$.
Starting with~$t = 1$, in this case, 
$\XX(A_{G_0},A_{G_1},n,n) = \emptyset$
for every $n > 0$. Indeed, when~$n > 0$,
any $\bldx = (x_\alpha \; x_\beta)^\top \in \XX(A_{G_0},n)$ 
must have $x_\beta = 0$ (since~$\beta$ has no outgoing edges
in~$G_0$),
but then $A_{G_1} \bldx \ge n \, \bldx$ implies that~$x_\alpha = 0$.
Hence, $n_{\max}(G,1) = 0$ and~$\ratio(G,1) = -\infty$.

Next, we turn to the second power of~$G$.
Here $S \cap (\Sigma^2)_0 = \{ aa, ab, cd, dc \}$,
$S \cap (\Sigma^2)_1 = \{ ac, bd, da, db \}$,
and, respectively,
\[
A_{(G^2)_0} =
\left(
\begin{array}{cc}
2 & 1 \\
0 & 1
\end{array}
\right)
\quad \textrm{and} \quad
A_{(G^2)_1} =
\left(
\begin{array}{cc}
1 & 1 \\
1 & 1
\end{array}
\right)
\; .
\]
We must have $n \le \lambda(A_{(G^2)_1}) = 2$.
Checking first the case $n = 2$, any
$\bldx \in \XX(A_{(G^2)_1},2)$
must be a multiple of $(1 \; 1)^\top$,
which is a (true) eigenvector of $A_{(G^2)_1}$
associated with the eigenvalue~$2$~\cite[Theorem~5.4]{MRS}.
Yet $(1 \; 1)^\top \not\in \XX(A_{(G^2)_0},2)$,
so we must have~$n \le 1$,
and it is easily seen that
$(1 \; 1)^\top \in \XX(A_{(G^2)_0},A_{(G^2)_1},1,1) \ne \emptyset$.
Therefore, by Theorem~\ref{thm:main} there exists
an $(S^2,1,1)$-encoder
(i.e., a rate $1:2$ bi-modal encoder for~$S$
with respect to the partition
$\left\{ (\Sigma^2)_0, (\Sigma^2)_1 \right\}$).
In this case, $n_{\max}(G,2) = 1$ and~$\ratio(G,2) = 1/2$.

Turning to the third power of~$G$, here
\[
\renewcommand{\arraystretch}{1.3}
\begin{array}{rcl}
S \cap (\Sigma^3)_0
& = & \{ aab, cda, cdb, dcd, acd, bdc, dac, dbd \} \; , \\
S \cap (\Sigma^3)_1
& = &\{ aac, abd, cdc, dcd, bda, bdb, daa, dab \} \; ,
\end{array}
\]
and, respectively,
\[
A_{(G^3)_0} =
\left(
\begin{array}{cc}
3 & 3 \\
1 & 1
\end{array}
\right)
\quad \textrm{and} \quad
A_{(G^3)_1} =
\left(
\begin{array}{cc}
2 & 3 \\
2 & 1
\end{array}
\right)
\; .
\]
We must have $n \le \lambda(A_{(G^3)_1}) = 4$,
and we again rule out the case $n = 4$ 
by observing that $\XX(A_{(G^3)_0},4)$
consists of multiples of $(3 \; 1)^\top$,
yet this vector is not in $\XX(A_{(G^3)_1},4)$.
Hence, we must have~$n \le 3$, and since
$(3 \; 1)^\top \in \XX(A_{(G^3)_0},A_{(G^3)_1},3,3)$
we get by Theorem~\ref{thm:main}
that there exists an $(S^3,3,3)$-encoder.
Thus, $n_{\max}(G,3) = 3$ and~$\ratio(G,3) = (\log_2 6)/3$.

In general, using the equalities
\[
A_{(G^t)_0} = A_{G_0} A_{(G^{t-1})_0}
+ A_{G_1} A_{(G^{t-1})_1} \; \phantom{,}
\]
and
\[
A_{(G^t)_1} = A_{G_0} A_{(G^{t-1})_1}
+ A_{G_1} A_{(G^{t-1})_0} \; ,
\]
it can be shown by induction on~$t$ that
\[
A_{(G^t)_0} =
\frac{1}{6}
\left(
\begin{array}{cc}
2^{t+1} + 3 + (-1)^t & 2^{t+1} - 2 \,(-1)^t \\
2^t     - 3 - (-1)^t & 2^t     + 2 \,(-1)^t
\end{array}
\right)
\; \phantom{.}
\]
and
\[
A_{(G^t)_1} =
\frac{1}{6}
\left(
\begin{array}{cc}
2^{t+1} - 3 + (-1)^t & 2^{t+1} - 2 \,(-1)^t \\
2^t     + 3 - (-1)^t & 2^t     + 2 \,(-1)^t
\end{array}
\right)
\; ,
\]
with $\lambda(A_{(G^t)_0}) = \lambda(A_{(G^t)_1}) = 2^{t-1}$
and respective eigenvectors
\[
\bldx_0 =
\left(
\begin{array}{c}
\arraycolsep0ex
2^{t+1} - 2 \, (-1)^t \\ 2^t - 3 - (-1)^t
\end{array}
\right)
\;\; \textrm{and} \;\;
\bldx_1 =
\left(
\arraycolsep0ex
\begin{array}{c}
2^{t+1} - 2 \, (-1)^t \\ 2^t + 3 - (-1)^t
\end{array}
\right)
\; .
\]
Clearly, each of the sets $\XX(A_{(G^t)_0},n)$ and $\XX(A_{(G^t)_1},n)$
is empty if and only if $n > 2^{t-1}$.
Their intersection, however,
is empty also when $n = 2^{t-1}$:
for $\parity \in \Int{2}$,
the elements of $\XX(A_{(G^t)_\parity},2^{t-1})$
are (true) eigenvectors of~$A_{(G^t)_\parity}$
associated with the Perron eigenvalue~$2^{t-1}$, namely,
they are scalar multiples of~$\bldx_\parity$,
yet~$\bldx_0$ and~$\bldx_1$ are linearly independent.
We therefore conclude that
$\XX(A_{(G^t)_0},A_{(G^t)_1},n_0,n_1) \ne \emptyset$ only when
$\max \{ n_0, n_1 \} \le 2^{t-1}$ and $n_0 + n_1 < 2^t$;
in particular, there can be no $(S^t,n_0,n_1)$-encoder when
$\log_2 (n_0 + n_1) = t$
(and this holds also when $n_0 \ne n_1$).
It follows that there can be
no rate $t:t$ bi-modal encoder for~$S$
(with respect to the partition
$\left\{ (\Sigma^t)_0, (\Sigma^t)_1 \right\}$),
for any positive integer~$t$.

On the other hand, for $n^{(t)} = 2^{t-1}{-}1$, we do have
$(2 \; 1)^\top\in \XX(A_{(G^t)_0},A_{(G^t)_1},n^{(t)},n^{(t)})$.
Therefore,
$\ratio(G,t) = (1/t) \log_2 (2^t{-}2) \; (< 1)$,
and
$\lim_{t \rightarrow \infty} \ratio(G,t) = 1 = \log_2 \lambda(A_G)$,
so the rate can \emph{approach} the capacity value of~$1$
as~$t$ increases (yet can never attain it).\qed
\end{example}

\begin{example}
\label{ex:twostatesalt}
Let~$G$ be as in Example~\ref{ex:twostates}, except that now
\begin{equation}
\label{eq:partitioningalt}
\Sigma_0 = \{ a \} \quad \textrm{and} \quad
\Sigma_1 = \{ b, c, d \} \; .
\end{equation}
It can be shown by induction on~$t$ that in this case,
\[
A_{(G^t)_0} =
\frac{1}{3}
\left(
\begin{array}{cc}
2^{t+1} + (-1)^t & 0 \\
0                & 2^t + 2\,(-1)^t
\end{array}
\right)
\; \phantom{.}
\]
and
\[
A_{(G^t)_1} =
\frac{1}{3}
\left(
\begin{array}{cc}
0            & 2^{t+1} - 2\,(-1)^t \\
2^t - (-1)^t & 0
\end{array}
\right)
\; .
\]
Letting
\[
n^{(t)} = (A_{(G^t)_0})_{\beta,\beta}
= (1/3) \left( 2^t + 2\,(-1)^t \right)
\]
and
\[
\lambda^{(t)} = \lambda \left( A_{(G^t)_1} \right)
= (\sqrt{2}/3) \left(2^t - (-1)^t \right) \; ,
\]
it can be verified that
$\XX(A_{(G^t)_0},A_{(G^t)_1},n_0,n_1) \ne \emptyset$ 
only when $n_0 \le n^{(t)}$ and $n_1 \le \lambda^{(t)}$.
In particular, since
$n^{(t)} + \lambda^{(t)} < 2^t$,
there can be no $(S^t,n_0,n_1)$-encoder when
$\log_2 (n_0 + n_1) = t$.

On the other hand, when $t > 2$, we do have
$(3 \; 2)^\top \in \XX(A_{(G^t)_0},A_{(G^t)_1},n^{(t)},n^{(t)})$.
Hence, $n_{\max}(G,t) = n^{(t)}$ and,
similarly to Example~\ref{ex:twostates},
$\ratio(G,t) = (1/t) \log_2 (2 n^{(t)}) \; (< 1)$:
we can approach the capacity value (of~$1$),
although we will never attain it.

Note that while the graphs~$G^t$ are primitive for all positive
integers~$t$, the graphs~$(G^t)_0$ in this example are reducible and
$(G^t)_1$ are irreducible with period~$2$.\qed
\end{example}

The result of Theorem~\ref{thm:primitive}
can be generalized to the non-primitive case as follows.
Let~$G$ be an irreducible deterministic graph with period~$\period$.
Then $G^\period$ decomposes into~$\period$ primitive
irreducible components~$G_0, G_1, \ldots, G_{\period-1}$,
with each~$G_i$ being the induced subgraph on 
a congruence class~$C_i$ in~$V(G)$. Suppose that for some~$i$,
there are two paths of length~$\period$ that start at states
in~$C_i$ and generate words with different parities.
Under these conditions, the subgraph~$G_i$ will satisfy
the conditions of Theorem~\ref{thm:primitive}.

Theorem~\ref{thm:primitive}
(along with the previous examples)
focused on $(S,n_0,n_1)$-encoders
where $n_0 = n_1$. While this case suits the motivation
of rate $p:q$ bi-modal encoders (where $n_0 = n_1 = 2^{p-1}$),
there seems to be a merit in studying the more general case as well
(see Example~\ref{ex:210rll} in Section~\ref{sec:anticipationone}).

\begin{example}
\label{ex:rateregion}
Let~$G$ be a deterministic graph with the adjacency matrix
\[
A_G =
\left(
\begin{array}{cc}
3 & 3 \\
5 & 5
\end{array}
\right)
\; ,
\]
and let~$\Sigma_0$ and~$\Sigma_1$ be such that
\[
A_{G_0} =
\left(
\begin{array}{cc}
1 & 2 \\
2 & 0
\end{array}
\right)
\quad \textrm{and} \quad
A_{G_1} =
\left(
\begin{array}{cc}
2 & 1 \\
3 & 5
\end{array}
\right)
\; .
\]
Then
\[
A_{(G^2)_0} =
\left(
\begin{array}{cc}
12 & 9 \\
23 & 32
\end{array}
\right)
\quad \textrm{and} \quad
A_{(G^2)_1} =
\left(
\begin{array}{cc}
12 & 15 \\
17 & 8
\end{array}
\right)
\; .
\]
It can be verified that $\lambda(A_{G^2}) = 64$,
$\lambda(A_{(G^2)_0}) \approx 39.5$, and
$\lambda(A_{(G^2)_1}) \approx 26.1$.
Figure~\ref{fig:rateregion} shows the boundary of
the region of all pairs~$(n_0,n_1)$
for which $\XX(A_{(G^2)_0},A_{(G^2)_1},n_0,n_1) \ne \emptyset$.\qed
\newcommand{\Bullet}{\circle*{5}}
\begin{figure}[hbt]
\begin{center}
\thicklines
\setlength{\unitlength}{0.16mm}
\begin{picture}(480,330)(-25,-25)
\put(000,000){\vector(1,0){430}}
\put(000,000){\vector(0,1){290}}
\put(-15,-15){\makebox(0,0){$0$}}
\put(450,-03){\makebox(0,0){$n_0$}}
\put(-22,300){\makebox(0,0){$n_1$}}
\put(390,-20){\makebox(0,0){$39$}}
\put(200,-05){\line(0,1){10}}
\put(200,-20){\makebox(0,0){$20$}}
\put(-22,260){\makebox(0,0){$26$}}
\put(-05,130){\line(1,0){10}}
\put(-22,130){\makebox(0,0){$13$}}
\multiput(000,260)(010,000){21}{\Bullet}
\multiput(390,000)(000,010){13}{\Bullet}
\put(210,250){\Bullet}
\put(220,230){\Bullet}
\put(230,210){\Bullet}
\put(240,200){\Bullet}
\put(250,190){\Bullet}
\put(260,180){\Bullet}
\put(270,180){\Bullet}
\put(280,170){\Bullet}
\put(290,170){\Bullet}
\put(300,160){\Bullet}
\put(310,160){\Bullet}
\put(320,150){\Bullet}
\put(330,150){\Bullet}
\put(340,140){\Bullet}
\put(350,140){\Bullet}
\put(360,140){\Bullet}
\put(370,140){\Bullet}
\put(380,130){\Bullet}
\put(390,130){\Bullet}
\end{picture}
\thinlines
\setlength{\unitlength}{1pt}
\end{center}
\caption{Attainable pairs~$(n_0,n_1)$
for Example~\protect\ref{ex:rateregion}.}
\label{fig:rateregion}
\end{figure}
\end{example}

\section{Necessary condition}
\label{sec:necessity}

In this section we prove the ``only if'' part in Theorem~\ref{thm:main},
recalled next.

\begin{theorem}
\label{thm:necessity}
Let $S$ be an irreducible constraint, presented by
an irreducible deterministic graph $G$, and let~$n_0$ and~$n_1$ be
positive integers.
Then, an $(S,n_0,n_1)$-encoder exists, only if
$\XX(A_{G_0},A_{G_1},n_0,n_1) \ne \emptyset$, namely only if
there exists a nonnegative nonzero integer vector~$\bldx$ such that
\begin{equation}
\label{eq:commonAE}
A_{G_0} \bldx \ge n_0 \bldx
\quad \textrm{and} \quad
A_{G_1} \bldx \ge n_1 \bldx
\; .
\end{equation}
\end{theorem}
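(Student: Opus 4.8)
## Proof proposal

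The plan is to push down to $G$ the (essentially trivial) common approximate eigenvector that the encoder carries, exploiting that $G$ is deterministic. First I would reduce to the case that $\encoder$ is irreducible: an irreducible sink of an $(S(G),n_0,n_1)$-encoder is an $(S',n_0,n_1)$-encoder for an irreducible subconstraint $S'$ presented by an irreducible component $G'$ of $G$ (as in the discussion following Example~\ref{ex:paritypreserving}); since $G'$ is a sink, every edge of $G$ out of a state of $G'$ stays in $G'$, so a common approximate eigenvector of $A_{G'_0},A_{G'_1}$ extends by zeros to one of $A_{G_0},A_{G_1}$. So assume $\encoder$ is irreducible. Because every state of $\encoder$ has exactly $n_0$ outgoing edges with labels in $\Sigma_0$ and exactly $n_1$ with labels in $\Sigma_1$, we have $A_{\encoder_\parity}\bldone=n_\parity\bldone$ for $\parity\in\Int{2}$, so $\bldone\in\XX(A_{\encoder_0},n_0)\cap\XX(A_{\encoder_1},n_1)$; the content of the theorem is to transport this vector to $V(G)$ along the inclusion $S(\encoder)\subseteq S(G)$.

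The transport device I would use is the subset determinization of $G$ run in lockstep with $\encoder$. Let $\Delta$ be the graph whose vertices are the pairs $(T,w)$, with $w\in V(\encoder)$ and $T$ a nonempty subset of $V(G)$, that are reachable from the pairs $(V(G),w)$ under the rule: from $(T,w)$ and an edge of $\encoder$ labelled $a$ from $w$ to $w'$, follow an edge of $\Delta$ labelled $a$ from $(T,w)$ to $(T{\cdot}a,w')$, where $T{\cdot}a$ is the set of states of $G$ reachable from $T$ by an edge labelled $a$. The crucial point is that every vertex of $\Delta$ is \emph{complete}: if $(T,w)$ is reached from $(V(G),w_0)$ along a path reading the word $\bldw$, then $T=V(G){\cdot}\bldw$, and for any edge of $\encoder$ labelled $a$ out of $w$ the word $\bldw a$ lies in $\FF_\encoder(w_0)\subseteq S$, so $T{\cdot}a=V(G){\cdot}(\bldw a)$ is nonempty and the edge does lift. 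Hence $\Delta$ is finite, nonempty, closed under successors, and each of its vertices has exactly $n_0$ outgoing edges labelled in $\Sigma_0$ and $n_1$ labelled in $\Sigma_1$ — inherited verbatim from $\encoder$ — so that the all-ones vector over $V(\Delta)$ again lies in $\XX(A_{\Delta_0},n_0)\cap\XX(A_{\Delta_1},n_1)$.

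It remains to read off from $\Delta$ a nonnegative nonzero integer vector $\bldx$ over $V(G)$ with $A_{G_0}\bldx\ge n_0\bldx$ and $A_{G_1}\bldx\ge n_1\bldx$. The natural attempt is a (possibly weighted) count over $\Delta$, such as $x_u=|\{(T,w)\in\Delta:\ u\in T\}|$, and to establish $A_{G_\parity}\bldx\ge n_\parity\bldx$ by matching, for each vertex $(T,w)$ of $\Delta$ with $u\in T$, the $n_\parity$ edges of $\encoder$ out of $w$ with labels in $\Sigma_\parity$ against the $\Sigma_\parity$-labelled edges of $G$ out of $u$ together with the vertices of $\Delta$ they reach — in effect a double count of the two sides of the inequality. \emph{This final step is where I expect the main difficulty to lie.} For the double count to yield ``$\ge$'' one needs an injectivity property of the matching, and it can break down on two counts: $u\in T$ does not force a label $a$ occurring out of $w$ to be readable from $u$ itself (only from \emph{some} state of $T$), and, since $\encoder$ is assumed only lossless rather than co-deterministic, two distinct states of $\encoder$ may carry identically-labelled edges into one common state. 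I would try to remove the first obstruction by counting only over states $u\in T$ from which the whole follower set of $w$ is readable (and arguing such states are present, possibly after economizing the presentation $G$), and the second by first replacing $\encoder$ with a co-deterministic graph that still has at least $n_\parity$ outgoing $\Sigma_\parity$-edges from every state and presents a subconstraint of $S$. Once the matching is injective, $(A_{G_\parity}\bldx)_u\ge n_\parity x_u$ follows for every $u$, and clearing denominators in the weighted version produces the integer vector, nonzeroness being automatic.
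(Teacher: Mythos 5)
Your proposal sets up a sensible frame (reduce to an irreducible encoder, observe that $\bldone$ is an exact common eigenvector of $A_{\encoder_0},A_{\encoder_1}$, build a product/determinization graph), but the step that actually proves the theorem --- transporting the vector from the auxiliary graph down to $V(G)$ --- is exactly the step you leave open, and the candidate you offer for it does not work. With $x_u=|\{(T,w)\in\Delta:\ u\in T\}|$, the inequality $(A_{G_\parity}\bldx)_u\ge n_\parity x_u$ requires that each pair $(T,w)$ with $u\in T$ contribute $n_\parity$ globally distinct triples $(a,T',w')$ with $a\in\Sigma_\parity(u)$ and $\terminal(u;a)\in T'$; but an edge of $\encoder$ out of $w$ labelled $a$ only guarantees that $a$ is readable from \emph{some} state of $T$, not from $u$, so the contribution can simply be absent. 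Your two suggested repairs are not carried out, and the second one (replacing $\encoder$ by a co-deterministic graph with the same out-degree profile) is itself a nontrivial claim: determinizing in either direction generally destroys the constant out-degree property, which is precisely why one cannot keep working with the all-ones vector after determinization.

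The paper's proof resolves this with two devices you do not have. First, it determinizes $\encoder$ itself (states are the subsets $Z=T_\encoder(\bldw,v)\subseteq V(\encoder)$), and uses losslessness of $\encoder$ to show that the \emph{cardinality} vector $c_Z=|Z|$ is an exact common eigenvector: the $n_\parity|Z|$ edges of $\encoder_\parity$ leaving $Z$ are partitioned by label, and losslessness forces $|E_a|=|Z_a|$, giving $A_{H'_\parity}\bldc=n_\parity\bldc$ on an irreducible sink $H'$. Second --- and this is the key move that sidesteps both obstructions you identify --- the transfer to $V(G)$ is done with a \emph{maximum}, not a count: $x_u=\max\{c_Z:\FF_{H'}(Z)\subseteq\FF_G(u)\}$. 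For the maximizing state $Z(u)$, every label on an edge of $H'_\parity$ out of $Z(u)$ is by construction readable from $u$ in $G$, and determinism of $G$ and $H'$ propagates the follower-set inclusion to the successors, yielding $x_{u_a}\ge c_{Z_a(u)}$ and hence $(A_{G_\parity}\bldx)_u\ge n_\parity c_{Z(u)}=n_\parity x_u$ with no injectivity argument needed. Without this max-over-follower-sets mechanism (or an equivalent), your double-count does not close, so as it stands the proposal has a genuine gap at its decisive step.
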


The proof of the theorem is a refinement of the proof of
Theorem~3 in~\cite{MR} (or Theorem~7.2 in~\cite{MRS}),
where it was shown, \emph{inter alia}, that
the existence of an~$(S,n)$-encoder implies
the existence of an $(A_G,n)$-approximate eigenvector.
Since the existence of an $(S,n_0,n_1)$-encoder 
implies the existence of $(S(G_\parity),n_\parity)$-encoders
for $\parity \in \Int{2}$,
it also implies
that~$\XX(A_{G_0},n_0)$ and~$\XX(A_{G_1},n_1)$ are both nonempty sets
(and, so, $n_0 \le \lambda(A_{G_0})$
and $n_1 \le \lambda(A_{G_1})$).
Theorem~\ref{thm:necessity} states
that their \emph{intersection} must be nonempty too.

\begin{proof}[Proof of Theorem~\ref{thm:necessity}]
Suppose that an~$(S,n_0,n_1)$-encoder exists,
and fix~$\encoder$ to be such an encoder.
Next, we follow the three steps of the proof in~\cite{MR}:
step~(a) is identical to the respective step in that proof,
and so are the definitions of~$H'$, $\bldc$, and~$\bldx$
in the remaining steps.

\emph{(a)
Construct a deterministic graph $H$ which presents $S(\encoder)$.}
For any word $\bldw$ and state
$v \in V(\encoder)$, let $T_\encoder(\bldw,v)$ denote
the subset of states in $\encoder$ which are accessible from $v$
by paths in $\encoder$ which generate $\bldw$
(when $\bldw$ is the empty word
define $T_\encoder(\bldw,v) = \{ v \}$).
The states of $H$ are defined as the distinct
nonempty subsets $\{ T_\encoder(\bldw,v) \}_{\bldw,v}$ of $V(\encoder)$,
and we endow~$H$ with an edge
$Z \stackrel{a}{\rightarrow} Z'$,
if and only if there exists
a state $v \in V(\encoder)$ and a word $\bldw$ such that
$Z = T_\encoder(\bldw,v)$ and $Z' = T_\encoder(\bldw a, v)$.
The graph~$H$ is deterministic by construction,
and by~\cite[Lemma~2.1]{MRS} we have $S(H) = S(\encoder)$.
For the rest of the proof, fix~$H'$ to be any irreducible sink of~$H$.
 
\emph{(b)
Define a positive integer vector $\bldc$ such that
$A_{H_0'} \bldc = n_0 \bldc$ and $A_{H_1'} \bldc = n_1 \bldc$.}
Recalling that each state $Z \in V(H')$
is a subset of $V(\encoder)$, let $c_Z = |Z|$ denote the number
of $\encoder$-states in $Z$ and let $\bldc$ be the positive
integer vector defined by
$\bldc = (c_Z)_{Z \in V(H')}$.
We claim that
\begin{equation}
\label{eq:AsubH}
A_{H_0'} \bldc = n_0 \bldc
\quad \textrm{and} \quad
A_{H_1'} \bldc = n_1 \bldc \; .
\end{equation}

Fix a parity $\parity \in \Int{2}$, and consider a state $Z \in V(H')$.
Since $\encoder_\parity$ has out-degree $n_\parity$, 
the number of edges in~$\encoder_\parity$
outgoing from the subset of states $Z \subseteq V(\encoder)$
is $n_\parity |Z|$
(this is also the number of outgoing edges from~$Z$ in~$\encoder$
with labels from~$\Sigma_\parity$).
For $a \in \Sigma_\parity$, let $E_a$ denote
the set of $\encoder_\parity$-edges labeled $a$ outgoing from
the $\encoder$-states in $Z$,
and let $Z_a$ denote the set of terminal $\encoder$-states of
these edges;
note that $\{ E_a \}_{a \in \Sigma_\parity}$ forms a partition of
the set of $\encoder_\parity$-edges outgoing from $Z$.
If $Z_a \ne \emptyset$, then clearly
there is an edge
$Z \stackrel{a}{\rightarrow} Z_a$
in $H$ and, since $H'$ is a sink, this edge is also contained in $H'$,
and, therefore, in~$H_\parity'$.
We now claim that any $\encoder$-state $u \in Z_a$ is accessible
in $\encoder$ (and in~$\encoder_\parity$)
by exactly one edge labeled~$a$ that starts at $Z$;
otherwise if $Z = T_\encoder(\bldw,v)$,
the word $\bldw a$ could be generated in $\encoder$ by two
distinct paths which start at~$v$ and terminate in~$u$,
contradicting the losslessness of $\encoder$.
Hence, $|E_a| = |Z_a|$ and, so, the entry of $A_{H_\parity'} \bldc$
corresponding to the $H'$-state $Z$ satisfies
\begin{eqnarray*}
(A_{H_\parity'} \bldc)_Z & = &
\sum_{Y \in V(H')} (A_{H_\parity'})_{Z,Y} c_Y
= \sum_{Y \in V(H')} (A_{H_\parity'})_{Z,Y} |Y| \\
& = &
\sum_{a \in \Sigma_\parity} |Z_a|
= \sum_{a \in \Sigma_\parity} |E_a|
= n_\parity |Z|
= n_\parity c_Z \; ,
\end{eqnarray*}
thus proving~(\ref{eq:AsubH}).
 
\emph{(c)
Construct from $\bldc$ a nonnegative nonzero integer vector~$\bldx$
that satisfies~(\ref{eq:commonAE}).}
Recalling the definition and notation of a follower set from
the beginning of Section~\ref{sec:background},
let $\bldx = (x_u)_{u \in V(G)}$ be
the nonnegative integer vector 
whose entries are defined for every $u \in V(G)$ by
\begin{equation}
\label{eq:xvector}
x_u = \max \Bigl\{  c_Z \;:\; Z \in V(H')
\; \textrm{and} \;
\FF_{H'}(Z) \subseteq \FF_G(u) \Bigr\} \; .
\end{equation}
Also, denote by $Z(u)$ some particular $H'$-state $Z$ for which
the maximum is attained in~(\ref{eq:xvector});
in case the maximum is over an empty set,
define $x_u = 0$ and $Z(u) = \emptyset$.
Since $S(H') \subseteq S(G)$, we get from~\cite[Lemma~2.13]{MRS}
that each follower set of an~$H'$-state is contained in
a follower set of some $G$-state;
hence, $\bldx$ is necessarily nonzero.

Next, we show that $\bldx$ satisfies~(\ref{eq:commonAE}).
Fix a parity~$\parity \in \Int{2}$,
and let $u$ be a $G$-state; if $x_u = 0$ then, trivially,
$(A_{G_\parity} \bldx)_u \ge n_\parity x_u$;
so, we assume hereafter that $x_u \ne 0$.
Let $\Sigma_\parity(Z(u))$ denote the set of labels of edges
in $H_\parity'$ outgoing from $Z(u)$, and, for
$a \in \Sigma_\parity(Z(u))$, let $Z_a(u)$
be the terminal state in $H'$ (and in~$H_\parity'$)
for an edge labeled~$a$ outgoing from the $H'$-state $Z(u)$.
Since $\FF_{H'}(Z(u)) \subseteq \FF_G(u)$,
there exists an edge labeled~$a$ in~$G$ from~$u$ which terminates in
some $G$-state $u_a$.
Now, since $G$ and $H'$ are both deterministic, we have
$\FF_{H'} (Z_a(u)) \subseteq \FF_G (u_a)$ and, so, 
by~(\ref{eq:xvector}) we get that $x_{u_a} \ge c_{Z_a(u)}$.
Therefore,
\begin{eqnarray*}
(A_{G_\parity} \bldx)_u
& \ge &
\sum_{a \in \Sigma_\parity(Z(u))} x_{u_a}
\ge \sum_{a \in \Sigma_\parity(Z(u))} c_{Z_a(u)} \\
& = &
(A_{H_\parity'} \bldc)_{Z(u)}
\stackrel{(\ref{eq:AsubH})}{=}
n_\parity c_{Z(u)}
= n_\parity x_u \; ,
\end{eqnarray*}
namely, $A_{G_\parity} \bldx \ge n_\parity \bldx$.
\end{proof}

The following corollary parallels Corollary~1 in~\cite{MR}
(or Theorem~7.2 in~\cite{MRS}).

\begin{corollary}
\label{cor:lowerboundstates}
Let $S$ be an irreducible constraint, presented by
an irreducible deterministic graph $G$, and let~$n_0$
and~$n_1$ be positive integers.
Then, for any $(S,n_0,n_1)$-encoder $\encoder$,
\[
|V(\encoder)| \ge
\min_{\bldx \in \XX(A_{G_0},A_{G_1},n_0,n_1)}
\| \bldx \|_\infty \; ,
\]
where~$\| (x_u)_u \|_\infty = \max_u x_u$.
\end{corollary}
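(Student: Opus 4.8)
The plan is to read the lower bound directly out of the proof of Theorem~\ref{thm:necessity}. That proof takes an arbitrary $(S,n_0,n_1)$-encoder $\encoder$ and manufactures from it a vector $\bldx \in \XX(A_{G_0},A_{G_1},n_0,n_1)$; the point is simply that $\|\bldx\|_\infty$ never exceeds $|V(\encoder)|$, so $|V(\encoder)|$ is at least the infimum (indeed minimum, see below) of $\|\cdot\|_\infty$ over that set. Concretely, recall the construction: in step~(b) one forms the subset-presentation $H$ of $S(\encoder)$, picks an irreducible sink $H'$, and sets $\bldc = (c_Z)_{Z\in V(H')}$ with $c_Z = |Z|$, where each state $Z$ of $H'$ is a subset of $V(\encoder)$. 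Then in step~(c) one defines $x_u = \max\{ c_Z : Z\in V(H'),\ \FF_{H'}(Z)\subseteq \FF_G(u)\}$ (and $x_u=0$ if the set is empty).

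The first step is the trivial-but-essential observation that every entry of $\bldx$ is bounded by $|V(\encoder)|$: indeed each $c_Z = |Z| \le |V(\encoder)|$ since $Z \subseteq V(\encoder)$, and $x_u$ is a maximum of some subcollection of these $c_Z$ values (or $0$), hence $x_u \le |V(\encoder)|$ for every $u \in V(G)$; therefore $\|\bldx\|_\infty \le |V(\encoder)|$. Combining this with the fact, established in the proof of Theorem~\ref{thm:necessity}, that this particular $\bldx$ lies in $\XX(A_{G_0},A_{G_1},n_0,n_1)$ (and is nonzero), we immediately get
\[
|V(\encoder)| \;\ge\; \|\bldx\|_\infty \;\ge\; \inf_{\bldy \in \XX(A_{G_0},A_{G_1},n_0,n_1)} \|\bldy\|_\infty \; .
\]

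The remaining point is that the infimum on the right is attained, i.e.\ can be written as a $\min$. This is because $\XX(A_{G_0},A_{G_1},n_0,n_1)$ is a set of \emph{integer} vectors, so the set of values $\{\|\bldy\|_\infty : \bldy \in \XX(A_{G_0},A_{G_1},n_0,n_1)\}$ is a nonempty subset of the positive integers (nonempty by Theorem~\ref{thm:main}, since an $(S,n_0,n_1)$-encoder is assumed to exist), and hence has a least element by well-ordering. This justifies writing $\min$ rather than $\inf$ in the statement, and completes the argument.

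I do not anticipate any real obstacle here: the corollary is essentially a bookkeeping remark extracted from the already-completed proof of Theorem~\ref{thm:necessity}, exactly as Corollary~1 of~\cite{MR} is extracted from the proof of the corresponding necessity result there. The only thing one must be slightly careful about is to use, verbatim, the vector $\bldx$ built in step~(c) of that proof (rather than some other approximate eigenvector), since it is the one whose $\infty$-norm is controlled by $|V(\encoder)|$ via the $c_Z = |Z| \le |V(\encoder)|$ bound.
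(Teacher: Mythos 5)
Your proposal is correct and follows exactly the paper's own argument: take the vector $\bldx$ constructed in step~(c) of the proof of Theorem~\ref{thm:necessity}, note that each of its entries is the cardinality of a subset of $V(\encoder)$ and hence at most $|V(\encoder)|$, and conclude. The only addition you make is the (harmless, and valid) remark that the minimum is attained because the norms range over a nonempty set of positive integers.
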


\begin{proof}
Given an $(S,n_0,n_1)$-encoder~$\encoder$, construct the vector
$\bldx \in \XX(A_{G_0},A_{G_1},n_0,n_1)$ as
in the proof of Theorem~\ref{thm:necessity}.
Then, by construction, each component of~$\bldx$ is a size
of a subset of $V(\encoder)$ and, hence, bounds it from below.
\end{proof}

The next corollary parallels Theorem~5 in~\cite{MR}
(or Theorem~7.15 in~\cite{MRS})
and is proved in the very same manner.

\begin{corollary}
\label{cor:lowerboundanticipation}
With $S$, $G$, $n_0$, $n_1$, and~$\encoder$ as in
Corollary~\ref{cor:lowerboundstates},
\[
\anticipation(\encoder) \ge
\log_n
\Bigl(
\min_{\bldx \in \XX(A_{G_0},A_{G_1},n_0,n_1)}
\| \bldx \|_\infty
\Bigr)
\; ,
\]
where~$n = \max \{ n_0, n_1 \}$.
\end{corollary}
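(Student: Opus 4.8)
The plan is to mimic the proof of \cite[Theorem~7.15]{MRS} (equivalently Theorem~5 in~\cite{MR}), which derives a lower bound on anticipation from a lower bound on the number of states, but to run it inside the subgraph $\encoder_\parity$ for the parity~$\parity$ that achieves $n_\parity = n = \max\{n_0,n_1\}$. First I would recall the standard observation that if $\encoder$ has anticipation $a = \anticipation(\encoder)$, then each state of $\encoder$ is determined by its incoming edge together with the next $a$ output symbols; more to the point, from a fixed state $v$ the number of states reachable by paths of length $a+1$ generating a \emph{fixed} word is at most one (this is essentially the definition of anticipation unwound one step). The quantitative consequence I want is: from any state $v$, the set $T_\encoder(\bldw, v)$ of states reachable by paths generating a word $\bldw$ of length $\ell$ has size at most $n^{\,a}$ once $\ell \ge a$ — because after the first $\ell - a$ symbols are ``used up'' the remaining $a$ symbols plus losslessness pin down each surviving state, and branching over those last $a$ steps in $\encoder$ is bounded by out-degree $n_0 + n_1$; one has to be a little careful here, and the cleaner route is the one used in~\cite{MR}.

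The cleaner route: build the deterministic presentation $H$ of $S(\encoder)$ exactly as in step~(a) of the proof of Theorem~\ref{thm:necessity}, take an irreducible sink $H'$, and form the vector $\bldc = (c_Z)_{Z \in V(H')}$ with $c_Z = |Z|$ and then the vector $\bldx \in \XX(A_{G_0},A_{G_1},n_0,n_1)$ of~\eqref{eq:xvector}. By the construction in that proof, $\|\bldx\|_\infty \le \|\bldc\|_\infty = \max_{Z \in V(H')} |Z|$, and hence
\[
\min_{\bldx' \in \XX(A_{G_0},A_{G_1},n_0,n_1)} \|\bldx'\|_\infty
\;\le\; \max_{Z \in V(H')} |Z| \; .
\]
So it suffices to bound $\max_Z |Z|$ from above by $n^{\anticipation(\encoder)}$. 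Each $Z \in V(H')$ is of the form $T_\encoder(\bldw,v)$ for some word $\bldw$ and state $v$. If $\anticipation(\encoder) = a$, then for any word $\bldy$ of length $a+1$ all paths from a common state generating $\bldy$ share their first edge; iterating, among all paths of length $\ell$ from $v$ generating a fixed length-$\ell$ word, the first $\ell - a$ edges (hence the state reached after $\ell - a$ steps) are uniquely determined once $\ell \ge a$. Therefore $T_\encoder(\bldw, v)$ equals the set of states reachable from that single intermediate state by paths of length $a$ generating a fixed suffix of length $a$; in $\encoder$, from one state there are at most $(n_0 + n_1)$ edges at each step but at most $n_\parity = n$ edges \emph{with a fixed label parity} — and here is exactly where one must argue at the level of the parity subgraph. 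Taking $\bldw$ to have a fixed parity pattern on its last $a$ symbols, the number of length-$a$ paths generating that fixed suffix from a fixed state is bounded, by losslessness together with the per-parity out-degrees being $n_0$ and $n_1$, by a product of $a$ factors each at most $n$, i.e. $n^a$. Hence $|Z| \le n^a$ for every $Z$, and so $\max_Z |Z| \le n^{\anticipation(\encoder)}$.

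Combining the two displays gives
\[
\min_{\bldx \in \XX(A_{G_0},A_{G_1},n_0,n_1)} \|\bldx\|_\infty
\;\le\; n^{\anticipation(\encoder)} \; ,
\]
and taking $\log_n$ of both sides yields the claimed inequality (the case $n = 1$ being handled separately and trivially, since then $n_0 = n_1 = 1$ and $\encoder$ is already a single-state-per-follower-set object, or can be excluded by the convention that makes $\log_1$ vacuous — it is cleanest to just note the bound is trivial when $\|\bldx\|_\infty = 1$). The main obstacle I anticipate is the last counting step: pinning down precisely why ``paths of length $a$ from a fixed state generating a fixed word, when we are free to choose the parity pattern of that word'' can be bounded by $n^a$ rather than $(n_0+n_1)^a$. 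The resolution is that we are \emph{not} free — $T_\encoder(\bldw,v)$ for a \emph{given} $\bldw$ already fixes the parity of each of its symbols, so at each of the $a$ final steps we branch only among edges of one prescribed parity, of which there are $n_0$ or $n_1$, both $\le n$; losslessness then ensures distinct surviving states correspond to distinct such paths. This is the same bookkeeping as in~\cite[Theorem~7.15]{MRS}, merely carried out parity-class by parity-class, which is why the statement asserts it ``is proved in the very same manner.''
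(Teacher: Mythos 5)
Your proposal is correct and follows exactly the route the paper intends: the paper gives no explicit proof, saying only that the corollary ``is proved in the very same manner'' as Theorem~5 of~\cite{MR}, and your argument is precisely that adaptation --- reuse the $H'$, $\bldc$, $\bldx$ construction from the proof of Theorem~\ref{thm:necessity} so that $\min_{\bldx}\|\bldx\|_\infty \le \max_Z |Z|$, then bound $|T_\encoder(\bldw,v)| \le n^{\anticipation(\encoder)}$ by noting that anticipation pins down all but the last $\anticipation(\encoder)$ edges and that each remaining step branches only among the at most $n_{\parity} \le n$ edges whose label parity is dictated by the fixed word $\bldw$. Your closing remark correctly identifies the one place where the bi-modal setting requires care (counting per parity class rather than over the full out-degree $n_0+n_1$), and your handling of it is sound.
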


The Franaszek algorithm is a known method for computing
approximate eigenvectors~\cite[\S 5.2.2]{MRS}.
Figure~\ref{fig:fz} presents a modification of it for
computing a vector in~$\XX(A_0,A_1,n_0,n_1)$,
where~$A_0$ and~$A_1$ are nonnegative integer $k \times k$ matrices
(a nonnegative integer $k$-vector~$\bldxi$ is provided
as an additional parameter to the algorithm).
The modified algorithm can be used to compute the lower bounds
of Corollaries~\ref{cor:lowerboundstates}
and~\ref{cor:lowerboundanticipation}, and will turn out to be useful
also when designing bi-modal encoders.

By slightly generalizing the proof of validity of
the (ordinary) Franaszek algorithm (see~\cite[\S 5.2.2]{MRS}) 
it follows that the algorithm
in Figure~\ref{fig:fz} returns the largest
(componentwise) vector $\bldx \in \XX(A_0,A_1,n_0,n_1)$
that satisfies $\bldx \le \bldxi$; if no such
vector exists, then the algorithm returns the all-zero vector.

\begin{figure}[hbt]
\noindent
\makebox[0in]{}\hrulefill\makebox[0in]{}
{\footnotesize
\begin{quote}
{\tt
$\bldy \leftarrow \bldxi$; \\
$\bldx \leftarrow \bldzero$; \\
while ($\bldx \ne \bldy$) \quad $\{$ \\ 
\hspace*{2em}
$\bldx \leftarrow \bldy$; \\
\hspace*{2em}
$\bldy \leftarrow \min
\left\{
\left\lfloor \frac{1}{n_0} A_0 \bldx \right\rfloor,
\left\lfloor \frac{1}{n_1} A_1 \bldx \right\rfloor,
\bldx
\right\}$; \\
\hspace*{6em}
$/*$ {\rm apply $\lfloor \cdot \rfloor$ and $\min \{ \cdot,
\cdot \}$ componentwise} $*/$ \\ 
$\}$ \\
return $\bldx$; }
\end{quote}\vspace{-2ex}
}
\noindent
\makebox[0in]{}\hrulefill\makebox[0in]{}
\caption{Modified Franaszek algorithm.}
\label{fig:fz}
\end{figure}

By analyzing the complexity of Karmarkar's algorithm~\cite{LY}
(in terms of number of bit operations),
one can conceptually infer an upper bound on the smallest possible
largest entry of any vector in $\XX(A_0,A_1,n_0,n_1)$, in terms of
$A_0$, $A_1$, $n_0$, and~$n_1$ (provided that we first use
that algorithm to determine that $\XX(A_0,A_1,n_0,n_1)$ is nonempty).
Equivalently, such an analysis implies an upper bound on
the smallest possible integer~$\xi > 0$ such that
running the algorithm in Figure~\ref{fig:fz}
with $\bldxi = \xi \cdot \bldone$ yields a nonzero output
$\bldx \in \XX(A_0,A_1,n_0,n_1)$.
It is still open whether there is a more direct way for computing
(efficiently) such an upper bound.

\section{Sufficient condition}
\label{sec:sufficiency}

We start proving the ``if'' part
in Theorem~\ref{thm:main} by considering two special cases
in Sections~\ref{sec:anticipationzero} and~\ref{sec:anticipationone}.
We then turn to the general case in Section~\ref{sec:stethering}.

\subsection{Deterministic encoders}
\label{sec:anticipationzero}

If $\XX(A_{G_0},A_{G_1},n_0,n_1)$ contains
a~$0\textrm{--}1$ vector, then a subgraph of~$G$ is
an~$(S(G),n_0,n_1)$-encoder with anticipation~$0$.
By Corollary~\ref{cor:lowerboundanticipation},
the existence of such a vector is also a necessary condition
for having a deterministic $(S(G),n_0,n_1)$-encoder.
If, in addition, $G$ has finite memory~$\mu$, then
the resulting encoder is $(\mu,0)$-definite
and, therefore,
$(\mu,0)$-sliding-block decodable for any tagging.

\subsection{Encoders with anticipation~$1$ obtained by state splitting}
\label{sec:anticipationone}

Suppose now that
$\XX(A_{G_0},A_{G_1},n_0,n_1)$ 
does not contain a~$0\textrm{--}1$ vector,
yet contains a vector~$\bldx = (x_u)_u$ such that
for each $\parity \in \Int{2}$, an application of one
$\bldx$-consistent state splitting round\footnote{%
Refer to~\cite[Ch.~5]{MRS} for the description of
the state-splitting algorithm and for the related terms used here.}
to $G_\parity$ (after deleting all states $u$ of~$G$ with $x_u = 0$)
results in an all-$1$ induced approximate eigenvector.
For $\parity \in \Int{2}$, let $\hat{\encoder}_\parity$ be
the resulting $(S(G_\parity),n_0,n_1)$-encoder.
Note that each state $u \in V(G)$ is transformed into
$x_u$ descendant states in each encoder~$\hat{\encoder}_\parity$;
denote those states by $(u,i)_\parity$, where $i \in \Int{x_u}$
(the order implied by the index~$i$ on the descendant states
of a given~$u$ can be arbitrary).

Next, construct the following graph~$\encoder$:
\[
V(\encoder) =
\Bigl\{
(u,i) \;:\; u \in V(G)
\quad \textrm{and} \quad  i \in \Int{x_u}
\Bigr\}
\; ,
\]
and endow $\encoder$ with
an edge $(u,i) \stackrel{a}{\rightarrow} (v,j)$
if and only if for some $\parity \in \Int{2}$,
the encoder $\hat{\encoder}_\parity$ contains
an edge $(u,i)_\parity \stackrel{a}{\rightarrow} (v,j)_\parity$.
It follows from the construction that
$\encoder_\parity = \hat{\encoder}_\parity$ for $\parity \in \Int{2}$.
In particular, from each $\encoder$-state there are~$n_\parity$ outgoing
edges with labels from~$\Sigma_\parity$, for each $\parity \in \Int{2}$.
Moreover, it can be readily seen that
$\FF_\encoder((u,i)) \subseteq \FF_G(u)$
for every $u \in V(G)$ and $i \in \Int{x_u}$.
Finally, $\encoder$ has anticipation~$1$:
if a word $w_1 w_2$ is generated in~$\encoder$ by a path $\pi$ from
state $(u,i) \in V(\encoder)$,
then the parent $G$-state~$u$ of~$(u,i)$
and the symbol~$w_1$ uniquely identify the parent $G$-state, $v$, of
the terminal state of the first edge in~$\pi$,
and the symbol~$w_2$ then uniquely identifies
the particular descendant state~$(v,j)$ of~$v$
in which that edge terminates.

In summary, $\encoder$ is
an $(S(G),n_0,n_1)$-encoder with anticipation~$1$.
Furthermore, if $G$ has finite memory~$\mu$, then~$\encoder$
is~$(\mu,1)$-definite and, therefore,
$(\mu,1)$-sliding-block decodable for any tagging.

\begin{example}
\label{ex:210rll}
We consider the $16$th power of the $(2,10)$-RLL constraint,
as found in the DVD standard~\cite[\S 1.7.3 and Example~5.7]{MRS}.
Let~$G$ be the graph presentation of that power,
where the states are numbered from~$0$ to~$10$,
and edges labeled with $16$-bit words that end with
a run of $0$s of length~$i \in \Int{11}$ terminate in state~$i$.
Also, let~$\Sigma_0$ (\resp, $\Sigma_1$) be
the set of all $16$-bit words of even (\resp, odd) parity
that satisfy the $(2,10)$-RLL constraint.
Then,
\[
A_{G_0} = 
\textrm{\small
$\left(
\renewcommand{\arraystretch}{0.8}
\arraycolsep0.7ex
\begin{array}{ccccccccccc}
42 & 28 & 19 & 12 &  8 &  6 &  5 &  4 &  3 &  2 &  1 \\
62 & 42 & 28 & 19 & 12 &  8 &  6 &  5 &  4 &  3 &  2 \\
90 & 62 & 42 & 28 & 19 & 12 &  8 &  6 &  5 &  4 &  3 \\
89 & 61 & 41 & 27 & 18 & 12 &  8 &  6 &  5 &  4 &  3 \\
88 & 60 & 40 & 26 & 17 & 11 &  8 &  6 &  5 &  4 &  3 \\
86 & 59 & 39 & 25 & 16 & 10 &  7 &  6 &  5 &  4 &  3 \\
82 & 57 & 38 & 24 & 15 &  9 &  6 &  5 &  5 &  4 &  3 \\
75 & 53 & 36 & 23 & 14 &  8 &  5 &  4 &  4 &  4 &  3 \\
65 & 46 & 32 & 21 & 13 &  7 &  4 &  3 &  3 &  3 &  3 \\
50 & 36 & 25 & 17 & 11 &  6 &  3 &  2 &  2 &  2 &  2 \\
29 & 21 & 15 & 10 &  7 &  4 &  2 &  1 &  1 &  1 &  1
\end{array}
\right)$}
\; \phantom{.}
\]
and
\[
A_{G_1} = 
\textrm{\small
$\left(
\arraycolsep0.7ex
\renewcommand{\arraystretch}{0.8}
\begin{array}{ccccccccccc}
41 & 29 & 21 & 15 & 10 &  7 &  4 &  2 &  1 &  1 &  1 \\ 
60 & 41 & 29 & 21 & 15 & 10 &  7 &  4 &  2 &  1 &  1 \\ 
87 & 60 & 41 & 29 & 21 & 15 & 10 &  7 &  4 &  2 &  1 \\ 
85 & 59 & 41 & 29 & 21 & 15 & 10 &  6 &  4 &  2 &  1 \\ 
82 & 57 & 40 & 29 & 21 & 15 & 10 &  6 &  3 &  2 &  1 \\ 
78 & 54 & 38 & 28 & 21 & 15 & 10 &  6 &  3 &  1 &  1 \\ 
73 & 50 & 35 & 26 & 20 & 15 & 10 &  6 &  3 &  1 &  0 \\ 
67 & 45 & 31 & 23 & 18 & 14 & 10 &  6 &  3 &  1 &  0 \\ 
59 & 39 & 26 & 19 & 15 & 12 &  9 &  6 &  3 &  1 &  0 \\ 
47 & 31 & 20 & 14 & 11 &  9 &  7 &  5 &  3 &  1 &  0 \\ 
28 & 19 & 12 &  8 & 6  &  5 &  4 &  3 &  2 &  1 &  0
\end{array}
\right)$}
\; .
\]
Running the algorithm in Figure~\ref{fig:fz} with
$A_0 = A_{G_0}$, $A_1 = A_{G_1}$, and
$\bldxi = 2 \cdot \bldone$
yields the result
\[
\bldx = 
( 1 \; 1 \; 2 \; 2 \; 2 \; 2 \; 1 \; 1 \; 1 \; 1 \; 0 )^\top \; ,
\]
for any~$n_0 \le 173$ and $n_1 \le 178$
(running the algorithm with larger values of~$n_0$ or~$n_1$
yields the all-zero vector).
This is also the vector obtained
when running the (ordinary) Franaszek algorithm
with~$A_G = A_{G_0} + A_{G_1}$,
$n = 351$, and~$\bldxi = 2 \cdot \bldone$.
In both~$G_0$ and~$G_1$ we can merge
states~$2\textrm{--}5$ into~$5$, states~$6\textrm{--}9$
into~$9$, and delete state~$10$
(see~\cite[\S 5.5.1]{MRS}),
resulting in graphs~$G_0'$ and~$G_1'$ with
\[
A_{G_0'} = 
\left(
\begin{array}{cccc}
42 & 28 &                45 &                14 \\
62 & 42 &                67 &                18 \\
86 & 59 &                90 &                22 \\
50 & 36 &                59 &                 9 
\end{array}
\right)
\; \phantom{,}
\]
and
\[
A_{G_1'} = 
\left(
\begin{array}{cccc}
41 & 29 &                53 &                 8 \\ 
60 & 41 &                75 &                14 \\ 
78 & 54 &               102 &                20 \\ 
47 & 31 &                54 &                16
\end{array}
\right)
\; ,
\]
and the respective
$(A_{G_0'},A_{G_1'},n_0{=}173,n_1{=}178)$-approximate eigenvector is
\[
\bldx' = ( 1 \; 1 \; 2 \; 1 )^\top \; .
\]
Both~$G_0'$ and~$G_1'$ can be split in one round
consistently with~$\bldx'$, resulting in the all-$1$
induced approximate eigenvector and,
therefore, in an~$(S(G),173,178)$-encoder.
The out-degree of the encoder that is actually used in the DVD
is~$2^p = 256$, where the set of input tags consists of
all $8$-bit tuples. Some of the input tags can be mapped
to two possible $16$-bit words
with different parities\footnote{%
There can be at most~$173 + 178 - 256 = 95$ input bytes of this type,
but in practice their number is slightly smaller.},
while the rest are mapped to unique $16$-bit words.\qed
\end{example}

\subsection{Construction using the stethering method}
\label{sec:stethering}

The technique used in Section~\ref{sec:anticipationone}
does not seem to generalize easily if the conditions therein---namely,
being able to split~$G_0$ and~$G_1$ in one round and ending up 
with an all-$1$ induced approximate eigenvector---do not hold.
In fact, due to the fact that the matrices~$G_0$ and~$G_1$
may be reducible, we may get stuck while attempting to split them.

\begin{example}
\label{ex:doesnotsplit}
Let $G$ be the graph with
$V(G) = \{ \alpha, \beta, \gamma \}$
whose even and odd subgraphs,
$G_0$ and $G_1$, are shown in
Figures~\ref{fig:G0doesnotsplit} and~\ref{fig:G1doesnotsplit}
(note that~$G_1$ is reducible).
All the edges in~$G$ are assumed to have distinct labels.
Assuming the ordering $\alpha < \beta < \gamma$
on the states, the adjacency matrices of the subgraphs are given by
\[
A_{G_0} =
\left(
\begin{array}{ccc}
0 & 1 & 0 \\
1 & 0 & 1 \\
1 & 1 & 1
\end{array}
\right)
\quad \textrm{and} \quad
A_{G_1} =
\left(
\begin{array}{ccc}
0 & 1 & 0 \\
1 & 0 & 1 \\
0 & 0 & 2
\end{array}
\right)
\; .
\]
It is easy to see that
$\bldx = (1 \; 2 \; 3)^\top$ is an eigenvector of both~$A_{G_0}$
and~$A_{G_1}$ associated with the Perron eigenvalue $n = 2$.
Yet the subgraph~$G_1$ cannot be split consistently with~$\bldx$.\qed
\begin{figure}[hbt]
\begin{center}
\thicklines
\setlength{\unitlength}{0.48mm}
\begin{picture}(170,40)(-10,-10)
    \multiput(000,000)(060,000){3}{\circle{20}}
    \qbezier(051,4.7)(030,14.5)(009,4.7)
    \put(050.6,4.7){\vector(2,-1){0}}
    \qbezier(051,-4.7)(030,-14.5)(009,-4.7)
    \put(009.4,-4.7){\vector(-2,1){0}}
    \qbezier(111,4.7)(090,14.5)(069,4.7)
    \put(110.6,4.7){\vector(2,-1){0}}
    \qbezier(111,-4.7)(090,-14.5)(069,-4.7)
    \put(069.4,-4.7){\vector(-2,1){0}}
    \qbezier(129.26,004)(150,015)(150,000)
    \qbezier(129.26,-04)(150,-15)(150,000)
    \put(129.26,-04){\vector(-2,1){0}}
    \qbezier(114,008)(060,046)(006,008)
    \put(006,008){\vector(-3,-2){0}}

    \put(000,000){\makebox(0,0){$\alpha$}}
    \put(060,000){\makebox(0,0){$\beta$}}
    \put(120,000){\makebox(0,0){$\gamma$}}
\end{picture}
\thinlines
\setlength{\unitlength}{1pt}
\end{center}
\caption{Subgraph $G_0$ for Example~\protect\ref{ex:doesnotsplit}.}
\label{fig:G0doesnotsplit}
\end{figure}
\begin{figure}[hbt]
\begin{center}
\thicklines
\setlength{\unitlength}{0.48mm}
\begin{picture}(170,30)(-10,-15)
    \multiput(000,000)(060,000){3}{\circle{20}}
    \qbezier(051,4.7)(030,14.5)(009,4.7)
    \put(050.6,4.7){\vector(2,-1){0}}
    \qbezier(051,-4.7)(030,-14.5)(009,-4.7)
    \put(009.4,-4.7){\vector(-2,1){0}}
    \put(070,000){\vector(1,0){40}}
    \qbezier(129.26,004)(150,015)(150,000)
    \qbezier(129.26,-04)(150,-15)(150,000)
    \put(129.26,-04){\vector(-2,1){0}}
    \qbezier(128,006)(160,025)(160,000)
    \qbezier(128,-06)(160,-25)(160,000)
    \put(128,-06){\vector(-2,1){0}}
    \put(000,000){\makebox(0,0){$\alpha$}}
    \put(060,000){\makebox(0,0){$\beta$}}
    \put(120,000){\makebox(0,0){$\gamma$}}
\end{picture}
\thinlines
\setlength{\unitlength}{1pt}
\end{center}
\caption{Subgraph $G_1$ for Example~\protect\ref{ex:doesnotsplit}.}
\label{fig:G1doesnotsplit}
\end{figure}
\end{example}

Moreover, in Appendix~\ref{sec:limitations} we present an example
where multiple rounds of state splitting are required,
which do end up with an all-$1$ approximate eigenvector, yet
there is no way one can match the descendant states in~$\encoder_0$ of
a given $G$-state with the respective descendant states in~$\encoder_1$ 
while maintaining finite anticipation.

Recognizing that the finite anticipation property is not
guaranteed even when the state-splitting algorithm is used
(at least in the manner we employed this algorithm
in Section~\ref{sec:anticipationone}), we resort to
a more general framework of designing encoders, which
includes the state-splitting algorithm and
the stethering design method of~\cite{AMR} as special cases
(see also~\cite{AGW} and~\cite[\S 6.2]{MRS}).
As we see, it will be rather easy to adapt the stethering method to
design bi-modal encoders, even though finite anticipation
can be guaranteed only under certain conditions.

Next we recall the stethering method, while tailoring it
to our setting.
Let~$G$ be a deterministic graph and $\{ \Sigma_0, \Sigma_1 \}$
be a partition of its label alphabet~$\Sigma$,
and let $\bldx = ( x_u )_{u \in V(G)}$ be
in $\XX(A_{G_0},A_{G_1},n_0,n_1)$. We assume that~$\bldx > \bldzero$,
or else remove the zero-weight states from~$G$
(namely, the states~$u$ for which $x_u = 0$).
For $u \in V(G)$, denote by
$\Sigma_\parity(u)$ the set of symbols from~$\Sigma_\parity$ that label
edges outgoing from~$u$.
For $u \in V(G)$ and
$a \in \Sigma_\parity(u)$, denote by~$\terminal(u;a)$
the terminal $G$-state of the unique edge outgoing
from~$u$ with label~$a$.

For $\parity \in \Int{2}$ and~$u \in V(G)$, let
\[
\Delta_\parity(u) =
\Bigl\{
(a,j) \;:\; a \in \Sigma_\parity(u)
\quad \textrm{and} \quad
j \in \Int{x_{\terminal(u;a)}}
\Bigr\}
\; .
\]
Since $\bldx \in \XX(A_{G_\parity},n_\parity)$ we have
$|\Delta_\parity(u)| = (A_{G_\parity} \bldx)_u \ge n_\parity x_u$.
Thus, we can partition
(a subset of) $\Delta_\parity(u)$ into~$x_u$ subsets
\begin{equation}
\label{eq:partition}
\Delta_\parity^{(0)}(u), \; \Delta_\parity^{(1)}(u), \; \ldots, \;
\Delta_\parity^{(x_u-1)}(u) \; ,
\end{equation}
such that $|\Delta_\parity^{(i)}(u)| = n_\parity$ for each~$i$.
In what follows, we fix such a partition.

Next, construct the following graph~$\encoder$:
\[
V(\encoder) =
\Bigl\{
(u,i) \;:\; u \in V(G)
\quad \textrm{and} \quad  i \in \Int{x_u}
\Bigr\}
\; ,
\]
and for each $\parity \in \Int{2}$, $u \in V(G)$,
$i \in \Int{x_u}$, and $(a,j) \in \Delta_\parity^{(i)}(u)$,
we endow $\encoder$ with an edge
$(u,i) \stackrel{a}{\rightarrow} (\terminal(u;a),j)$.

\begin{proposition}
\label{prop:AGW}
The constructed graph~$\encoder$ is an~$(S(G),n_0,n_1)$-encoder.
\end{proposition}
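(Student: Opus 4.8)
The plan is to verify directly, from the construction, that $\encoder$ satisfies all three defining properties of an $(S(G),n_0,n_1)$-encoder: (i) $S(\encoder)\subseteq S(G)$; (ii) every state of $\encoder$ has out-degree $n_0$ among edges with even labels and $n_1$ among edges with odd labels (hence out-degree $n_0+n_1$ overall); and (iii) $\encoder$ is lossless. Property (ii) is essentially immediate from the bookkeeping: fix a state $(u,i)\in V(\encoder)$ and a parity $\parity\in\Int{2}$; the edges of $\encoder_\parity$ leaving $(u,i)$ are exactly those indexed by the pairs $(a,j)\in\Delta_\parity^{(i)}(u)$, and since the partition~\eqref{eq:partition} was chosen so that $|\Delta_\parity^{(i)}(u)|=n_\parity$ for every $i$, there are precisely $n_\parity$ such edges. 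This uses only the hypothesis $\bldx\in\XX(A_{G_0},A_{G_1},n_0,n_1)$, which guarantees $|\Delta_\parity(u)|=(A_{G_\parity}\bldx)_u\ge n_\parity x_u$, so the partition into $x_u$ blocks of size $n_\parity$ is possible.

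For property (i), I would argue that there is a natural label-preserving graph homomorphism $\phi:\encoder\to G$ given by $\phi((u,i))=u$: every edge $(u,i)\stackrel{a}{\rightarrow}(\terminal(u;a),j)$ of $\encoder$ maps to the (unique) edge $u\stackrel{a}{\rightarrow}\terminal(u;a)$ of $G$. Consequently any path in $\encoder$ maps to a path in $G$ generating the same word, so $S(\encoder)\subseteq S(G)$; in fact one gets the stronger statement $\FF_\encoder((u,i))\subseteq\FF_G(u)$, which will be needed later for the anticipation analysis but is not needed here. This step is routine, essentially a restatement of how the edges of $\encoder$ were defined.

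The one step requiring a little care, and the main obstacle, is losslessness (iii). Suppose two paths $\pi,\pi'$ in $\encoder$ have the same initial state $(u,i)$, the same terminal state $(v,\ell)$, and generate the same word $\bldw=w_1w_2\cdots w_k$. Project them via $\phi$ to paths in $G$ from $u$ to $v$ generating $\bldw$; since $G$ is deterministic, these two projected paths coincide edge-by-edge, so $\pi$ and $\pi'$ visit the same sequence of $G$-states $u=u_0,u_1,\ldots,u_k=v$ and traverse the same labels. It remains to show $\pi$ and $\pi'$ agree on the second coordinate at every step, which I would prove by forward induction: if $\pi$ and $\pi'$ are at the common state $(u_{t-1},i_{t-1})$ and both read label $w_t$, then the next edge of each is determined by which block $\Delta_{\parity_t}^{(i_{t-1})}(u_{t-1})$ contains a pair $(w_t,j)$ — but the blocks $\{\Delta_{\parity}^{(i)}(u)\}_i$ are \emph{disjoint} (they partition a subset of $\Delta_\parity(u)$) and a fixed pair $(w_t,j)$ lies in at most one block, while distinct $j$ within that one block would give distinct terminal second-coordinates. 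Hence the second coordinate after step $t$ is forced, so $i_t$ is the same for $\pi$ and $\pi'$; iterating to $t=k$ forces the terminal states to match in the way they already do and, more to the point, forces $\pi=\pi'$. Thus $\encoder$ is lossless, completing the verification that $\encoder$ is an $(S(G),n_0,n_1)$-encoder. (Note that no claim is made here about finite anticipation — that is the subject of the conditions discussed later in the section.)
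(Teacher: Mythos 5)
Your verification of the out-degree property and of $S(\encoder)\subseteq S(G)$ is correct and matches the paper. The gap is in the losslessness argument: the forward induction does not go through, because $\encoder$ is in general \emph{not} deterministic. From a fixed state $(u,i)$ and a fixed label $a\in\Sigma_\parity(u)$, the outgoing edges labeled~$a$ are indexed by \emph{all} pairs $(a,j)\in\Delta_\parity^{(i)}(u)$, and a single block $\Delta_\parity^{(i)}(u)$ may well contain several pairs $(a,j)$ with distinct~$j$ (in the stethering partition~(\ref{eq:partitionstethering}) this happens whenever consecutive positions $\phi_\parity(u;a)+j$ fall into the same window of length $n_\parity$; see the edges labeled~$a$ out of $(u,0)$ in Figure~\ref{fig:stethering}, which go to $(v,0),(v,1),\ldots$). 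So knowing $(u_{t-1},i_{t-1})$ and $w_t$ does \emph{not} force $i_t$; your remark that ``distinct $j$ within that one block would give distinct terminal second-coordinates'' is precisely the obstruction, not a reason the step is forced. Indeed, if your forward induction were valid, $\encoder$ would be deterministic and hence have anticipation~$0$, which would trivialize the entire discussion of anticipation in this section.

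The disjointness of the blocks $\{\Delta_\parity^{(i)}(u)\}_{i\in\Int{x_u}}$ gives determinism in the \emph{reverse} direction, and that is how the induction must run: anchor it at the common terminal state and proceed backward. Knowing $(u_{m+1},i_{m+1})$ and $w_{m+1}$, the pair $(w_{m+1},i_{m+1})$ lies in exactly one block $\Delta_{\parity_{m+1}}^{(i)}(u_m)$, and that index~$i$ must equal~$i_m$; hence the second coordinates are determined from the last state back to the first. This is also why losslessness genuinely needs the terminal state as part of its hypothesis. With the direction of the induction reversed in this way, your proof coincides with the paper's.
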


\begin{proof}
First, by construction, the number of outgoing edges from~$(u,i)$
with labels from~$\Sigma_\parity$ is
$|\Delta_\parity^{(i)}(u)| = n_\parity$, for each $\parity \in \Int{2}$.

Secondly, let
\begin{equation}
\label{eq:path}
(u_0,i_0)
\stackrel{w_1}{\longrightarrow} (u_1,i_1)
\stackrel{w_2}{\longrightarrow} (u_2,i_2)
\stackrel{w_3}{\longrightarrow} \ldots
\stackrel{w_\ell}{\longrightarrow} (u_\ell,i_\ell)
\end{equation}
be a path in~$\encoder$.
By construction, $u_{m+1} = \terminal(u_m;w_{m+1})$
for every $m \in \Int{\ell}$. Hence,
$S(\encoder) \subseteq S(G)$.

It remains to show that~$\encoder$ is lossless.
Consider the word $\bldw = w_1 w_2 \ldots w_\ell$
generated by the path~(\ref{eq:path}). We show that
the knowledge of~$\bldw$, $(u_0,i_0)$, and~$(u_\ell,w_\ell)$
uniquely determines the rest of the states along the path.
Since~$G$ is deterministic, the component~$u_m$ of each
state~$(u_m,i_m)$ along the path is uniquely determined.
Suppose by induction that~$(u_{m'},i_{m'})$ has been uniquely
determined for every $m' > m$, and let~$\parity_{m+1}$ be
the parity of~$w_{m+1}$ (i.e., $w_{m+1} \in \Sigma_{\parity_{m+1}}$).
Since the subsets in~(\ref{eq:partition}) are disjoint
for~$(u,\parity) = (u_m,\parity_{m+1})$, there is
a unique index $i \in \Int{x_{u_m}}$ 
for which $(w_{m+1},i_{m+1}) \in \Delta_{\parity_{m+1}}^{(i)}(u_m)$;
that index must be~$i = i_m$.
\end{proof}

The number of states of the constructed encoder~$\encoder$
(before any possible merging of states)
equals the sum, $\| \bldx \|_1$, of the entries of~$\bldx$.
Thus, with this construction, we can obtain an encoder~$\encoder$ 
such that
\[
|V(\encoder)| \le
\min_{\bldx \in \XX(A_{G_0},A_{G_1},n_0,n_1)}
\| \bldx \|_1
\]
(compare with the lower bound of Corollary~\ref{cor:lowerboundstates}).

In stethering encoders, the subsets in~(\ref{eq:partition})
have a particular structure which we describe next.
For $\parity \in \Int{2}$ and $u \in V(G)$,
assume some ordering on the elements of~$\Sigma_\parity(u)$.
For $a \in \Sigma_\parity(u)$, define $\phi_\parity(u;a)$ by
\[
\phi_\parity(u;a) =
\sum_{b \in \Sigma_\parity(u) \;:\; b < a} x_{u_b} \; ,
\]
where the sum is zero on an empty set.
For
$i \in \Int{x_u}$, let
\begin{eqnarray}
\lefteqn{
\Delta_\parity^{(i)}(u) =
\Bigl\{
(a,j)
\;:\;  a \in \Sigma_\parity(u) \; ,
\;\;
j \in \Int{x_{\terminal(u;a)}} \; ,
} \makebox[8ex]{} \nonumber \\
&&
\label{eq:partitionstethering}
\textrm{and} \;\;
i \, n_\parity \le \phi_\parity(u;a) + j < (i{+}1) n_\parity
\Bigr\}
\; .
\end{eqnarray}
In other words,
for each $u, v \in V(G)$, $i \in \Int{x_u}$, $j \in \Int{x_v}$,
and $a \in \Sigma_\parity(u)$, we endow~$\encoder$ with an edge
$(u,i) \stackrel{a}{\rightarrow} (v,j)$, if and only if
$v = \terminal(u;a)$ and
\[
i \, n_\parity \le \phi_\parity(u;a) + j < (i{+}1) n_\parity \; .
\]

This construction is illustrated in Figure~\ref{fig:stethering},
for a given $G$-state~$u$ and parity~$\parity \in \Int{2}$.
The boxes in the top row in the figure represent
the ``descendants'' of state~$u$, namely,
the states~$(u,i)$, for~$i \in \Int{x_u}$,
and the width of each box in the top row is one unit.
The outgoing edges from each state~$(u,i)$ are shown as downward
arrows, along with their labels, and an assignment of input tags,
$(\parity,0), (\parity,1), \ldots, (\parity,n_\parity{-}1)$,
is shown above the top row. The boxes at the bottom row 
are $1/n_\parity$ units wide
and represent the terminal states of the edges.
The respective elements of~$\Delta_\parity(u)$ are written
just below the bottom row, where we have also shown
their grouping into the subsets~(\ref{eq:partition})
defined by~(\ref{eq:partitionstethering}).
The double vertical lines group the edges according to their labels.
So, for example, according to the figure,
there is an outgoing edge labeled~$a'$
and tagged by~$(\parity,0)$ from~$(u,x_u{-}1)$
to~$(v',x_{v'}{-}1)$, and that edge corresponds to the element
$(a',x_{v'}{-}1) \in \Delta_\parity(u)$.
\begin{figure*}[hbt]
\[
\begin{array}{cccccccccc}
\multicolumn{1}{c}{\scriptstyle (\parity,0)} &
\multicolumn{1}{c@{\scriptstyle \ldots}}{\scriptstyle (\parity,1)} &
\multicolumn{1}{c}{\scriptstyle (\parity,n_\parity{-}1)} &
\multicolumn{1}{c}{\scriptstyle (\parity,0)} &
\multicolumn{1}{c@{\scriptstyle \ldots}}{\scriptstyle (\parity,1)} &
\multicolumn{1}{c}{\scriptstyle (\parity,n_\parity{-}1)} &
\multicolumn{1}{c}{\scriptstyle (\parity,0)} &
\multicolumn{1}{c@{\scriptstyle \ldots}}{\scriptstyle (\parity,1)} &
\multicolumn{1}{c}{\scriptstyle (\parity,n_\parity{-}1)} &
\multicolumn{1}{c}{} \\
\cline{1-9}
\multicolumn{3}{||c|}{(u,0)} &
\multicolumn{3}{c|}{\cdots} &
\multicolumn{3}{c||}{(u,x_u{-}1)}  \\
\cline{1-9} 
\phantom{a} \downarrow a &
\phantom{a} \downarrow a &
\phantom{a} \downarrow a &
\phantom{a} \downarrow a &
\phantom{a'} \downarrow a' &
\phantom{a'} \downarrow a' &
\phantom{a'} \downarrow a' &
\phantom{a''} \downarrow a'' &
\phantom{a''} \downarrow a'' &
\\
\cline{1-10}
\multicolumn{1}{||c|}{\makebox[2.9em][c]{$\scriptstyle (v{,}0)$}} &
\multicolumn{1}{c|}{\makebox[2.9em][c]{$\scriptstyle (v{,}1)$}} &
\multicolumn{1}{c|}{\makebox[2.9em][c]{$\cdots$}} &
\multicolumn{1}{c||}{\makebox[2.9em][c]{$\scriptstyle (v{,}x_v{-}1)$}} &
\multicolumn{1}{c|}{\makebox[2.9em][c]{$\scriptstyle (v'{,}0)$}} &
\multicolumn{1}{c|}{\makebox[2.9em][c]{$\cdots$}} &
\multicolumn{1}{c||}{\makebox[2.9em][c]{$\scriptstyle
                                            (v'{,}x_{v'}{-}1)$}} &
\multicolumn{1}{c|}{\makebox[2.9em][c]{$\scriptstyle (v''{,}0)$}} &
\multicolumn{1}{c|}{\makebox[2.9em][c]{$\scriptstyle (v''{,}1)$}} &
\multicolumn{1}{c||}{\makebox[2.9em][c]{$\cdots$}} \\
\cline{1-10}
\multicolumn{1}{c}{\scriptstyle (a{,}0)} &
\multicolumn{1}{c}{\scriptstyle (a{,}1)} &
\multicolumn{1}{c}{\cdots} &
\multicolumn{1}{c}{\scriptstyle (a{,}x_v{-}1)} &
\multicolumn{1}{c}{\scriptstyle (a'{,}0)} &
\multicolumn{1}{c}{\cdots} &
\multicolumn{1}{c}{\scriptstyle (a'{,}x_{v'}{-}1)} &
\multicolumn{1}{c}{\scriptstyle (a''{,}0)} &
\multicolumn{1}{c}{\scriptstyle (a''{,}1)} &
\multicolumn{1}{c}{\cdots} \\
\multicolumn{3}{c}{\longleftarrow \hfill
\Delta_\parity^{(0)}(u) \hfill \longrightarrow} &
\multicolumn{3}{c}{\longleftarrow \hfill
\cdots \hfill \longrightarrow} &
\multicolumn{3}{c}{\longleftarrow \hfill
\Delta_\parity^{(x_u-1)}(u) \hfill \longrightarrow} &
\end{array}
\]
\caption{Descendants of a $G$-state~$u$
in a subgraph~$\encoder_\parity$ of a stethering encoder.}
\label{fig:stethering}
\end{figure*}

Stethering encoders can have finite anticipation
(and be sliding-block decodable if~$G$ has finite memory),
provided that there is sufficient margin between
the target encoder rate $p:q$ and
the maximal coding ratio~$\ratio(G,q)$
(as defined in~(\ref{eq:codingratio})).
We demonstrate this next.

Suppose that $\bldx \in \XX(A_{G_0},A_{G_1},n_0{+}1,n_1{+}1)$
(namely, we assume even and odd out-degrees larger by~$1$
than targeted).
Using~$\bldx$, we first construct
a stethering $(S(G),n_0{+}1,n_1{+}1)$-encoder~$\encoder^*$
and assign the input tags
$(\parity,0), (\parity,1), \ldots , (\parity,n_\parity)$
to the outgoing edges from each state, as
in Figure~\ref{fig:stethering}. Then, from $\encoder^*$
we form a punctured $(S(G),n_0,n_1)$-encoder $\encoder$
by deleting all edges in~$\encoder$ tagged
by either~$(0,n_0)$ or~$(1,n_1)$.

We have the following result (compare the guaranteed
upper bound on~$\anticipation(\encoder)$ to the lower bound
in Corollary~\ref{cor:lowerboundanticipation}).

\begin{theorem}
\label{thm:stethering}
Let~$G$ be
a deterministic graph and let~$n_0$ and~$n_1$ be positive integers
such that
$\XX(A_{G_0},A_{G_1},n_0{+}1,n_1{+}1) \ne \emptyset$.
Then, there is an~$(S(G),n_0,n_1)$-encoder~$\encoder$, obtained by
the (punctured) stethering method, such that
$\anticipation(\encoder) \le a$, where
\[
a = 1 +
\min_{\bldx \in \XX(A_{G_0},A_{G_1},n_0{+}1,n_1{+}1)}
\Bigl\{ \,
\lceil \log_{n+1} \| \bldx \|_\infty \rceil \, \Bigr\}
\]
and~$n = \min \{ n_0, n_1 \}$.
Furthermore, if $G$ has finite memory~$\mu$, then~$\encoder$ is
$(\mu,a)$-definite,
and hence any tagged $(S(G),n_0,n_1)$-encoder based on~$\encoder$ is
$(\mu,a)$-sliding-block decodable.
\end{theorem}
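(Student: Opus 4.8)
The plan is to analyze the punctured stethering encoder $\encoder$ described just above the statement, and to bound its anticipation by tracing how much look-ahead is needed to recover the first edge of a path. Recall that $\encoder$ is obtained from a stethering $(S(G),n_0{+}1,n_1{+}1)$-encoder $\encoder^*$, built from a vector $\bldx \in \XX(A_{G_0},A_{G_1},n_0{+}1,n_1{+}1)$ with $\|\bldx\|_\infty$ as small as possible, by deleting the edges carrying the tags $(0,n_0)$ and $(1,n_1)$. The key structural feature of the stethering layout (Figure~\ref{fig:stethering}) is that the descendants $(u,0), (u,1), \ldots, (u,x_u{-}1)$ of a $G$-state $u$ correspond to consecutive unit-length intervals, and within each parity class the outgoing edges of $(u,i)$ read off the sub-interval $[i\,n_\parity, (i{+}1)n_\parity)$ of the concatenated list $\Delta_\parity(u)$. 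Because $G$ is deterministic, a generated word already pins down the sequence of $G$-states $u_0, u_1, u_2, \ldots$ along any path; the only ambiguity is in the second coordinates $i_0, i_1, \ldots$, and in particular in $i_0$, which is exactly what we must resolve in order to identify the first $\encoder$-edge and its tag.

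First I would make precise the ``interval shrinking'' mechanism. Fix the generated word $w_1 w_2 \cdots$ and hence the $G$-states $u_0, u_1, \ldots$. The set of possible values of $i_0$ compatible with a length-$\ell$ prefix is an interval (a sub-interval of $\Int{x_{u_0}}$), and each additional symbol $w_{m+1}$, being read off a sub-interval of width $n_{\parity_{m+1}}$ inside $\Delta_{\parity_{m+1}}(u_m)$ which itself sits inside the width-$x_{u_m}$ layout, contracts the interval of possible $i_0$'s by a factor of at least $n = \min\{n_0,n_1\}$ relative to the ``full'' (unpunctured) width $n_\parity{+}1$. More carefully: in $\encoder^*$ the state $(u,i)$ would have $n_\parity{+}1$ outgoing edges of parity $\parity$, and reading one symbol shrinks a candidate interval from width $W$ (in units of $(u_m,\cdot)$-boxes) to width at most $W/(n_\parity{+}1)$ in units of $(u_{m+1},\cdot)$-boxes — but in the punctured encoder we have thrown away one tag per parity, so only $n_\parity$ edges survive at each state, and the surviving edges partition into blocks of size $n_\parity$ whose pre-images are still the $x_{u_m}$ original unit boxes; the upshot is that after $\lceil \log_{n+1}\|\bldx\|_\infty\rceil$ symbols the candidate interval for $i_0$, measured in $(u_m,\cdot)$-box units, has width strictly less than $1$, i.e. contains a unique integer. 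That integer is $i_0$, and together with $w_1$ it determines the first edge (hence its tag, in any tagging). This gives anticipation at most $\lceil \log_{n+1}\|\bldx\|_\infty\rceil$, and the extra ``$1{+}$'' absorbs the off-by-one bookkeeping at the very first step (where the tag $(0,n_0)$ or $(1,n_1)$ having been deleted means the first symbol alone already restricts $i_0$ to one of the $n_\parity$-blocks but not further); minimizing over all admissible $\bldx$ yields the stated bound $a = 1 + \min_{\bldx}\lceil\log_{n+1}\|\bldx\|_\infty\rceil$.

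For the finite-memory addendum, I would argue that when $G$ has memory $\mu$, the $G$-state $u_m$ reached after reading $w_{m-\mu+1}\cdots w_m$ is determined by those $\mu$ symbols alone, with no dependence on $u_0$; combining this ``forgetting the past'' on the $G$-coordinate with the ``determining the future second-coordinate'' argument above shows that the $(m{+}1)$st edge of any $\encoder^*$- (hence $\encoder$-) path is fixed once we know the window of symbols from position $m-\mu+1$ through position $m+1+a$. That is precisely the definition of $(\mu,a)$-definiteness, and $(\mu,a)$-definiteness of the graph implies $(\mu,a)$-sliding-block decodability for \emph{any} assignment of input tags, as noted in Section~\ref{sec:background}.

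The main obstacle I anticipate is getting the interval-contraction bookkeeping exactly right, in particular correctly accounting for the \emph{puncturing}: after deleting the $(0,n_0)$ and $(1,n_1)$ tags, the surviving edges of a given parity out of $(u,i)$ no longer occupy a clean width-$n_\parity$ interval of $\Delta_\parity(u)$ relative to $\encoder^*$'s layout — they are the first $n_\parity$ of the $n_\parity{+}1$ edges in each block — so one must check that the map ``symbol read $\mapsto$ surviving sub-interval'' still contracts candidate $i_0$-intervals by the full factor $n+1$ rather than merely $n$. This is what forces the base $n+1$ (not $n$) in the logarithm and is the delicate point; I would handle it by working throughout with the $\encoder^*$-geometry (where every state genuinely has $n_\parity{+}1$ outgoing $\parity$-edges and the contraction factor is honestly $n_\parity{+}1 \ge n+1$) and only at the end observing that deleting edges can only shrink the set of paths, hence only shrink the anticipation.
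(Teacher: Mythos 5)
Your interval-shrinking skeleton matches the paper's proof (the paper formalizes it via the array~$\Array$ and the mixed-base abscissa $i_0.s_{\parity_1}s_{\parity_2}\ldots$), and your treatment of the finite-memory addendum is fine. But the decisive step is missing, and your proposed way around it would fail. First, a framing error: anticipation is measured \emph{given} the initial state, so $i_0$ is known; what must be recovered is the first \emph{edge}, i.e., the first tag $s_{\parity_1}$ (equivalently, the index $i_1$ of the terminal box), and knowing ``$i_0$ together with $w_1$'' does not determine the first edge, because the stethering construction deliberately places several parallel edges labeled $w_1$ out of $(u_0,i_0)$, one for each box $(u_1,j)$ in the corresponding sub-interval of the bottom row. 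Second, and more seriously, shrinking the abscissa's uncertainty interval below any prescribed length does not by itself determine the digit $s_{\parity_1}$: two paths whose tag sequences are $(s,\,n_{\parity_2},\,n_{\parity_3},\ldots)$ and $(s{+}1,\,0,\,0,\ldots)$ have abscissas converging to the same boundary point from opposite sides and can generate the same word indefinitely in the \emph{unpunctured} encoder~$\encoder^*$. This is the $0.999\ldots=1.000\ldots$ ambiguity, and it is precisely why $\encoder^*$ need not have finite anticipation at all. Consequently your fallback---``work throughout with the $\encoder^*$-geometry \ldots\ and only at the end observe that deleting edges can only shrink the anticipation''---proves nothing: the inequality $\anticipation(\encoder)\le\anticipation(\encoder^*)$ is sound, but the right-hand side need not satisfy the claimed bound, nor even be finite.

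Relatedly, the role of the puncturing is not, as you suggest, to preserve a contraction factor of $n{+}1$ rather than $n$ (that factor already comes from each state of $\encoder^*$ having $n_\parity{+}1$ outgoing $\parity$-edges); it is to break the carry ambiguity just described. The paper's argument shrinks the interval until the two-digit numerator $(n_{\parity_2}{+}1)s_{\parity_1}+s_{\parity_2}$ is determined up to $\pm1$, and then observes that since the puncturing forbids $s_{\parity_2}=n_{\parity_2}$, all surviving candidates share the same value of $s_{\parity_1}$; this is also exactly where the extra ``$1+$'' in the definition of $a$ comes from (one additional symbol is needed to resolve the second digit). Your write-up needs this step, or an equivalent one, to become a proof.
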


\begin{proof}
The proof is essentially the same as that
of Proposition~3 in~\cite{AMR}, and we repeat it here
(with the required modifications to handle
the bi-modal setting) for completeness.

Let $\Stether_\parity(u)$ denote
Figure~\ref{fig:stethering} drawn for a given
$\parity \in \Int{2}$ and~$u \in V(G)$.
Consider a path
\[
\pi = 
(u_0,i_0)
\stackrel{w_1}{\longrightarrow} (u_1,i_1)
\stackrel{w_2}{\longrightarrow} (u_2,i_2)
\stackrel{w_3}{\longrightarrow} \ldots
\]
in the encoder~$\encoder^*$ (obtained prior to the puncturing),
and let $(\parity_1,s_1),(\parity_2,s_2),(\parity_3,s_3),\ldots$
be the respective sequence of input tags, where
$w_m \in \Sigma_{\parity_m}$. Envision an array~$\Array$
which is constructed as follows.
Start with the figure~$\Stether_{\parity_1}(u_0)$;
the edges labeled~$w_1$ in the figure
terminate in the descendant states of~$u_1$,
which appear as~$x_{u_1}$ boxes in
the bottom row of~$\Stether_{\parity_1}(u_0)$.
Up to down-scaling by a factor of~$n_{\parity_1}$,
these boxes are identical to the top row in~$\Stether_{\parity_2}(u_1)$.
So, in~$\Array$, superimpose a down-scaled copy of
$\Stether_{\parity_2}(u_1)$ so that its top row coincides with
the descendant states of~$u_1$ in~$\Stether_{\parity_1}(u_0)$.
Proceed in this manner by placing in~$\Array$ a copy
of $\Stether_{\parity_3}(u_2)$, down-scaled by a factor
of~$n_{\parity_1} n_{\parity_2}$, so that its top row coincides
with the descendant states of~$u_2$ in the bottom row of
the (already inserted) down-scaled copy of~$\Stether_{\parity_2}(u_1)$.
And so on.

The path~$\pi$ can be seen as a vertical line in~$\Array$
whose abscissa (i.e., the distance from the left margin of~$\Array$)
has the mixed-base representation
$i_0.s_{\parity_1} s_{\parity_2} s_{\parity_3} \ldots$,
where $i_0 \in \Int{x_u}$ and
$s_{\parity_m} \in \Int{n_{\parity_m}{+}1}$.
In other words, that abscissa equals
\[
i_0 +
\frac{s_{\parity_1}}{n_{\parity_1}{+}1}
+ \frac{s_{\parity_2}}{(n_{\parity_1}{+}1)(n_{\parity_2}{+}1)}
+ \ldots
\]
Due to the down-scaling process used to construct~$\Array$,
a decoder can narrow down the uncertainty of that abscissa
for each received symbol.
Specifically, merely by the knowledge of~$u_0$ that abscissa
must be in the real interval $[0,x_{u_0})$ (which is the full
width of~$\Array$). Upon receiving~$w_1$, 
the length of that uncertainty (open) interval
shrinks to $x_{u_1}/(n_{\parity_1}{+}1)$; then~$w_2$
reduces it to $x_{u_2}/((n_{\parity_1}{+}1)(n_{\parity_2}{+}1))$,
and so forth. Hence, when the length~$\ell$ of the path is such that
\begin{equation}
\label{eq:abscissa}
\frac{x_{u_\ell}}{\prod_{m=1}^\ell (n_{\parity_m}+1)}
\le
\frac{1}{(n_{\parity_1}{+}1)(n_{\parity_2}{+}1)} \; ,
\end{equation}
the length of the uncertainty interval reduces to
at most the right-hand side of~(\ref{eq:abscissa}).
At this point, the numerator in the expression
\[
\frac{(n_{\parity_2}{+}1) s_{\parity_1} + s_{\parity_2}}%
                                {(n_{\parity_1}{+}1)(n_{\parity_2}{+}1)}
\]
(for the abscissa point $0.s_{\parity_1} s_{\parity_2}$)
can be determined up to~$\pm 1$. Yet since the puncturing disallows
$s_{\parity_2}$ to take the value~$n_{\parity_2}$, this means
that $s_{\parity_1}$ is uniquely determined.
It is easy to see that~(\ref{eq:abscissa}) is satisfied
for $\ell = a + 1 = 2 + \lceil \log_{n+1} \| \bldx \|_\infty \rceil$,
where $n = \min \{ n_0, n_1 \}$, thereby proving the claimed
upper bound on~$\anticipation(\encoder)$.
Moreover,
if~$G$ has finite memory~$\mu$, then the decoder can recover~$u_0$
by looking at a window of~$\mu$ past symbols, i.e.,
$\encoder$ is~$(\mu,a)$-definite.
\end{proof}

Recall that $n_{\max}(G,q)$ is the largest integer~$n$ for which
$(S(G^q),n,n)$-encoders exist.
If we use the punctured stethering method to construct
rate $p:q$ bi-modal encoders, then we need
to have $2^p + 1 \le n_{\max}(G,q)$. This inequality is satisfied
whenever
\[
\frac{p}{q} \le
\frac{\log_2 n_{\max}(G,q)}{q} - \frac{\log_2 (1+2^{-p})}{q} \; ,
\]
which, in turn, is satisfied whenever
\[
\frac{p}{q} \le 
\ratio(G,q) - \frac{\log_2 e}{2^p q}
\]
(see~(\ref{eq:codingratio})).
We conclude that finite anticipation (and sliding-block decodability
when~$G$ has finite memory) can be guaranteed with a rate penalty
of (no more than) $(\log_2 e)/(2^p q)$.

It is still an open problem whether
finite anticipation can be guaranteed for any~$(n_0,n_1)$ for which
$\XX(A_{G_0},A_{G_1},n_0,n_1) \ne \emptyset$.

\subsection{Extension to non-partition covers}
\label{sec:non-partition}

We refer now to a more general setting
where the even subset $\Sigma_0$ and the odd subset $\Sigma_1$
of the constraint alphabet~$\Sigma$ are not necessarily disjoint,
but their union is still $\Sigma$
(recall Footnote~\ref{footnote:disparity} for an application of
this scenario). Thus, given a graph~$G$,
the subgraphs $G_0$ and $G_1$ may share edges.
It turns out that most of our results hold also for this setting.
In particular, the necessary condition stated in
Theorem~\ref{thm:necessity} holds as is,
since the proof of the theorem does not assume that
$\Sigma_0$ and $\Sigma_1$ are disjoint.

As for the sufficient condition shown in Section~\ref{sec:stethering},
the proof of Proposition~\ref{prop:AGW} holds as long as
the partitions~(\ref{eq:partition}) satisfy the following
condition for every
$u \in V(G)$, $a \in \Sigma_0(u) \cap \Sigma_1(u)$,
and $j \in \Int{x_{\terminal(u;a)}}$:
\begin{equation}
\label{eq:consistency}
(a,j) \in \Delta_0^{(i)}(u) \cap \Delta_1^{(i')}(u)
\quad \Longrightarrow \quad i = i'
\end{equation}
(this guarantees the uniqueness of the index $i = i_m$
in the last step of the proof of Proposition~\ref{prop:AGW},
even when the parity $\parity_{m+1}$ is not uniquely determined by
$w_{m+1}$).
Assuming without loss of generality that $n_0 \le n_1$,
we can apply the following strategy to guarantee
the condition~(\ref{eq:consistency}). We first select
an arbitrary partition~(\ref{eq:partition}) for $\parity = 0$
(where $|\Delta_0^{(i)}(u)| = n_0$ for each $i \in \Int{x_u}$).
Then, for each $u \in V(G)$, $a \in \Sigma_0(u) \cap \Sigma_1(u)$,
and $i \in \Int{x_u}$,
we let all elements $(a,j) \in \Delta_0^{(i)}(u)$
be also elements of $\Delta_1^{(i)}(u)$.
Finally, we fill in each subset $\Delta_1^{(i)}(u)$ from
the remaining elements of $\Delta_1(u)$
to have $|\Delta_1^{(i)}(u)| = n_1$.

We note that, in general,
the stethering partition~(\ref{eq:partitionstethering})
(which is used in the proof of Theorem~\ref{thm:stethering})
may be inconsistent with the condition~(\ref{eq:consistency}).
Still, we can always get consistency when $n_0 = n_1$,
by selecting the ordering on $\Sigma_0(u)$
and $\Sigma_1(u)$ so that the elements 
in the intersection $\Sigma_0(u) \cap \Sigma_1(u)$ precede the rest.

\section{Variable-length encoders}
\label{sec:vle}

So far in this work,
we considered bi-modal \emph{fixed-length} encoders
at a fixed rate $p:q$, where all tags have
the same length~$p$, and all labels have the same length~$q$
(under the formulation of $(S,n)$-encoders, these lengths are~$1$).
On the other hand, most ad-hoc constructions
for parity-preserving encoders that were proposed have
\emph{variable length} (yet still with a \emph{fixed coding ratio}).
In this section, we show through examples that
the flexibility of having variable-length (tags and) labels
strictly increases the attainable range of 
the coding ratios of parity-preserving encoders,
compared to the fixed-length case.
A more thorough study of parity-preserving variable-length encoders is
deferred to a subsequent work~\cite{RS}.
For more on variable-length encoders
(in the non-parity-preserving setting), see~%
\cite{AFKM},
\cite{Beal1},
\cite{Beal2},
\cite{Franaszek1},
\cite{Franaszek2},
\cite{HMS},
\cite[\S 6.4]{MRS}.

\begin{example}
\label{ex:vle}
Let~$S$ be the constraint over $\Sigma = \{ a, b, c, d \}$
that is presented by the graph~$G$ in Figure~\ref{fig:twostates}.
Consider the graph~$\encoder$ in Figure~\ref{fig:vle},
which has one state~$\alpha$ and three edges: an edge labeled~$a$
and two edges of length~$2$, with labels $bd$ and~$cd$
(namely, the length of an edge is the length of its label).
\begin{figure}[hbt]
\begin{center}
\thicklines
\setlength{\unitlength}{0.48mm}
\begin{picture}(080,035)(-35,-15)
    \put(000,000){\circle{20}}

    \qbezier(-9.26,004)(-30,015)(-30,000)
    \qbezier(-9.26,-04)(-30,-15)(-30,000)
    \put(-9.26,004){\vector(2,-1){0}}

    \qbezier(9.26,004)(030,015)(030,000)
    \qbezier(9.26,-04)(030,-15)(030,000)
    \put(9.26,004){\vector(-2,-1){0}}
    \qbezier(7.14,007)(040,025)(040,000)
    \qbezier(7.14,-07)(040,-25)(040,000)
    \put(7.14,007){\vector(-3,-2){0}}

\put(000,000){\makebox(0,0){$\alpha$}}

\put(-32,000){\makebox(0,0)[r]{$a$}}
\put(028,001){\makebox(0,0)[r]{$bd$}}
\put(042,001){\makebox(0,0)[l]{$cd$}}
\end{picture}
\thinlines
\setlength{\unitlength}{1pt}
\end{center}
\caption{Variable-length encoder~$\encoder$ for
the constraint presented by Figure~\protect\ref{fig:twostates}.}
\label{fig:vle}
\end{figure}
For the purpose of defining the words that can be generated
by a variable-length graph such as~$\encoder$, we view each
length-$\ell$ edge as if it were a path of length~$\ell$
(whose edges are connected by additional $\ell{-}1$ dummy states).
Doing so, it is easy to see that every word that can be generated
by~$\encoder$ can also be generated from state~$\alpha$
in the graph~$G$ of Figure~\ref{fig:twostates}.
The graph~$\encoder$ is \emph{deterministic} in the sense
that the set of labels is \emph{prefix-free}:
no label is a prefix of any other label. Hence, a word
generated by~$\encoder$ uniquely identifies the path that generates it.

We now assign tags over the (base tag) alphabet
$\Upsilon = \{ 0, 1 \}$ to the edges (labels) of~$\encoder$,
as shown in Table~\ref{tab:vle}.
\begin{table}[hbt]
\caption{Possible tag assignment for
the encoder in Figure~\ref{fig:vle}.}
\label{tab:vle}
\[
\begin{array}{rcl}
0  & \leftrightarrow & a  \\
10 & \leftrightarrow & bd \\
11 & \leftrightarrow & cd \\
\end{array}
\]
\end{table}
We get in this manner an encoder that has a \emph{coding ratio} of~$1$:
the coding rate is $1:1$ when the input tag is~$0$, 
and $2:2$ when the input tag starts with a~$1$.
Thus, this encoder is capacity-achieving.
Moreover, this tag assignment is parity-preserving
with respect to the partition $\{ \Sigma_0, \Sigma_1 \}$
defined in~(\ref{eq:partitioning}).
In contrast, we showed in Example~\ref{ex:twostates}
that, for this partition,
a coding ratio of~$1$ cannot be achieved 
by any bi-modal (and, \emph{a fortiori}, parity-preserving)
fixed-length encoder.\qed
\end{example}

\begin{example}
\label{ex:vlealt}
Considering the same constraint~$S$ as in the previous example,
the graph~$\encoder'$ in Figure~\ref{fig:vlealt} presents
another (untagged) variable-length encoder.
The coding rate at state~$\alpha'$
is $3:3$, as it has eight outgoing edges with labels in~$\Sigma^3$,
and the coding rate at~$\alpha''$ and at~$\beta$ is $2:2$,
as each state has four outgoing edges labeled from $\Sigma^2$;
the coding ratio at each state is therefore~$1$, 
making~$\encoder'$ capacity-achieving.
However, $\encoder'$ is not deterministic
(there are two edges labeled $bda$ and two labeled $cda$
outgoing from state~$\alpha'$,
two edges labeled $aa$ outgoing from~$\alpha''$,
and two labeled $da$ from state~$\beta$).
Nevertheless, $\encoder'$ has finite anticipation
and is therefore lossless:
the first symbol of a label uniquely determines
the length of the label as well as the initial state,
and a label and the first symbol of the next label 
within a sequence uniquely determine the edge.
One possible assignment of tags
(over the alphabet $\Upsilon = \{ 0, 1 \}$)
to the edges of~$\encoder'$ is shown in Table~\ref{tab:vlealt}.
\begin{figure}[hbt]
\begin{center}
\thicklines
\setlength{\unitlength}{0.48mm}
\begin{picture}(150,110)(-45,-70)
    \multiput(000,000)(060,000){2}{\circle{20}}
    \put(030,-40){\circle{20}}
    \qbezier(051,4.5)(030,14.5)(009,4.5)
    \put(051,4.5){\vector(2,-1){0}}
    \qbezier(054,8.3)(030,28.8)(006,8.3)
    \put(054,8.3){\vector(5,-4){0}}
    \qbezier(058,9.9)(030,48.0)(002,9.9)
    \put(058,9.9){\vector(2,-3){0}}
    \qbezier(051,-4.5)(030,-14.5)(009,-4.5)
    \put(009,-4.5){\vector(-2,1){0}}

    \put(010,000){\vector(1,0){40}}

    \qbezier(-9.26,004)(-30,015)(-30,000)
    \qbezier(-9.26,-04)(-30,-15)(-30,000)
    \put(-9.26,004){\vector(2,-1){0}}
    \qbezier(-7.14,007)(-40,025)(-40,000)
    \qbezier(-7.14,-07)(-40,-25)(-40,000)
    \put(-7.14,007){\vector(3,-2){0}}

    \qbezier(69.26,004)(090,015)(090,000)
    \qbezier(69.26,-04)(090,-15)(090,000)
    \put(69.26,004){\vector(-2,-1){0}}
    \qbezier(67.14,007)(100,025)(100,000)
    \qbezier(67.14,-07)(100,-25)(100,000)
    \put(67.14,007){\vector(-3,-2){0}}

    \put(024,-32){\vector(-3,4){18}}
    \put(054,-08){\vector(-3,-4){18}}

    \qbezier(040,-40)(73.4,-41.3)(62.8,-9.6)
    \put(62.8,-9.6){\vector(-1,4){0}}
    \qbezier(38.6,-34.8)(56.2,-28.2)(57.8,-9.6)
    \put(57.8,-9.6){\vector(1,6){0}}

    \qbezier(020,-40)(-13.4,-41.3)(-2.8,-9.6)
    \put(020,-40){\vector(1,0){0}}
    \qbezier(21.4,-34.8)(3.8,-28.2)(2.2,-9.6)
    \put(21.4,-34.8){\vector(3,-1){0}}

    \qbezier(034,-49.26)(045,-70)(030,-70)
    \qbezier(026,-49.26)(015,-70)(030,-70)
    \put(034,-49.26){\vector(-1,2){0}}

\put(001,001){\makebox(0,0){$\alpha'$}}
\put(060,000){\makebox(0,0){$\beta$}}
\put(031,-39){\makebox(0,0){$\alpha''$}}

\put(-41,001){\makebox(0,0)[r]{$bda$}}
\put(-29,001){\makebox(0,0)[l]{$cda$}}
\put(030,030.6){\makebox(0,0)[b]{$\mathbf{bdb}$}}
\put(030,020.2){\makebox(0,0)[b]{$\mathbf{bdc}$}}
\put(030,011.0){\makebox(0,0)[b]{$\mathbf{cdb}$}}
\put(030,001.5){\makebox(0,0)[b]{$\mathbf{cdc}$}}
\put(030,-08.0){\makebox(0,0)[b]{$\mathbf{da}$}}
\put(088,001){\makebox(0,0)[r]{$db$}}
\put(102,001){\makebox(0,0)[l]{$dc$}}
\put(042,-20){\makebox(0,0)[r]{$\mathbf{da}$}}
\put(018,-21){\makebox(0,0)[l]{$aa$}}
\put(072,-35){\makebox(0,0)[r]{$\mathbf{ab}$}}
\put(060,-30){\makebox(0,0)[r]{$\mathbf{ac}$}}
\put(-16,-36){\makebox(0,0)[l]{$bda$}}
\put(-03,-29){\makebox(0,0)[l]{$cda$}}
\put(030,-67){\makebox(0,0)[b]{$aa$}}
\end{picture}
\thinlines
\setlength{\unitlength}{1pt}
\end{center}
\caption{Second variable-length encoder~$\encoder'$ for
the constraint presented by Figure~\protect\ref{fig:twostates}.}
\label{fig:vlealt}
\end{figure}
\begin{table}[hbt]
\caption{Possible tag assignment for
the encoder in Figure~\ref{fig:vlealt}.}
\label{tab:vlealt}
\[
\begin{array}{rcccrcccrcc}
\multicolumn{3}{c}{\mathrm{State} \; \alpha'}  &&
\multicolumn{3}{c}{\mathrm{State} \; \alpha''} &&
\multicolumn{3}{c}{\mathrm{State} \; \beta}    \\
\cline{1-3} \cline{5-7} \cline{9-11}
000, 011 & \leftrightarrow   & bda          &\;\;&
00, 11   & \leftrightarrow   & aa           &\;\;&
01, 10   & \leftrightarrow   & \mathbf{da}  \\
101, 110 & \leftrightarrow   & cda          &\;\;&
01       & \leftrightarrow   & \mathbf{ac}  &\;\;&
00       & \leftrightarrow   & db           \\
001      & \leftrightarrow   & \mathbf{bdb} &\;\;&
10       & \leftrightarrow   & \mathbf{ab}  &\;\;&
11       & \leftrightarrow   & dc           \\
010      & \leftrightarrow   & \mathbf{bdc} &&&&&& \\
100      & \leftrightarrow   & \mathbf{cdb} &&&&&& \\
111      & \leftrightarrow   & \mathbf{cdc} &&&&&& \\
\end{array}
\]
\end{table}

Consider now the partition~$\{ \Sigma_0, \Sigma_1 \}$
defined in~(\ref{eq:partitioningalt}).
The labels in boldface in Figure~\ref{fig:vlealt}
are the odd labels with respect to this partition.
It can be readily verified that the tag assignment
in Table~\ref{tab:vlealt} is parity-preserving.
In contrast, recall that we have shown in
Example~\ref{ex:twostatesalt}
that there is no bi-modal fixed-length encoder
at a coding ratio of~$1$ for this constraint and this partition.

The encoder in Figure~\ref{fig:vlealt}
can be obtained by first splitting state~$\alpha$ in
Figure~\ref{fig:twostates} into states~$\alpha'$ and $\alpha''$
which inherit, respectively,
the outgoing edge sets $\{ b, c \}$ and $\{ a \}$.
The resulting graph is an (ordinary) $(S,2)$-encoder,
yet, for the above partition of~$\Sigma$,
the parities of the two outgoing edges from each state are the same.
We then replace the outgoing edges from state~$\alpha'$
with the eight paths of length~$3$ that start at that state;
similarly, we replace the outgoing edges from
each of the states~$\alpha''$ and~$\beta$ with the four paths of
length~$2$ that start at the state.\qed
\end{example}

To summarize, for the two different partitions,
(\ref{eq:partitioning}) and~(\ref{eq:partitioningalt}),
of the alphabet $\Sigma = \{ a, b, c , d \}$,
Examples~\ref{ex:vle} and~\ref{ex:vlealt} present
respective (capacity-achieving)
parity-preserving variable-length encoders with a coding ratio of~$1$:
the first encoder is deterministic, while the other is not.
In fact, we show in~\cite{RS} that
for the partition~(\ref{eq:partitioningalt}),
one cannot achieve a coding ratio of~$1$ by
any deterministic parity-preserving variable-length encoder
(unless one uses a degenerate base tag alphabet
containing only even symbols).

On the other hand, there exists such an encoder
under some relaxation of the notion of fixed coding ratio,
following the encoding model considered in~\cite{HMS}:
the tagged encoder~$\encoder^\circ$ in Figure~\ref{fig:vleHMS}
maintains a coding ratio of~$1$ \emph{along each cycle}.
\begin{figure}[hbt]
\begin{center}
\thicklines
\setlength{\unitlength}{0.48mm}
\begin{picture}(150,050)(-45,-30)
    \multiput(000,000)(060,000){2}{\circle{20}}
    \qbezier(051,4.5)(030,14.5)(009,4.5)
    \put(051,4.5){\vector(2,-1){0}}
    \qbezier(051,-4.5)(030,-14.5)(009,-4.5)
    \put(009,-4.5){\vector(-2,1){0}}
    \qbezier(054,-8.3)(030,-28.8)(006,-8.3)
    \put(006,-8.3){\vector(-5,4){0}}

    \qbezier(-9.26,004)(-30,015)(-30,000)
    \qbezier(-9.26,-04)(-30,-15)(-30,000)
    \put(-9.26,004){\vector(2,-1){0}}
    \qbezier(-7.14,007)(-40,025)(-40,000)
    \qbezier(-7.14,-07)(-40,-25)(-40,000)
    \put(-7.14,007){\vector(3,-2){0}}

    \qbezier(69.26,004)(090,015)(090,000)
    \qbezier(69.26,-04)(090,-15)(090,000)
    \put(69.26,004){\vector(-2,-1){0}}

\put(000,000){\makebox(0,0){$\alpha$}}
\put(060,000){\makebox(0,0){$\beta$}}

\put(-43,000){\makebox(0,0)[r]{$11/cd$}}
\put(-27,000){\makebox(0,0)[l]{$0/a$}}
\put(030,011.0){\makebox(0,0)[b]{$10/\mathbf{b}$}}
\put(030,-08.0){\makebox(0,0)[b]{$1/\mathbf{da}$}}
\put(030,-20.0){\makebox(0,0)[t]{$01/\mathbf{dcd}$}}
\put(093,000){\makebox(0,0)[l]{$00/db$}}
\end{picture}
\thinlines
\setlength{\unitlength}{1pt}
\end{center}
\caption{Third variable-length encoder~$\encoder^\circ$ for
the constraint presented by Figure~\protect\ref{fig:twostates}.}
\label{fig:vleHMS}
\end{figure}
It is easily seen that while at state~$\alpha$,
each outgoing edge is uniquely determined by its first symbol,
and while at state~$\beta$, an outgoing edge is
uniquely determined by its first two symbols.

\ifIEEE
   \appendices
\else
   \section*{$\,$\hfill Appendix\hfill$\,$}
   \appendix
\fi

\section{Limitations of state splitting}
\label{sec:limitations}

In contrast to what we have shown in Section~\ref{sec:anticipationone},
we present here an example where the state-splitting algorithm
yields encoders~$\encoder_0$ and~$\encoder_1$ with anticipation~$3$,
yet any attempt to match between
the descendant states in~$\encoder_0$ of a given $G$-state
and the respective descendant states
in~$\encoder_1$ results in an encoder that has
no finite anticipation.\footnote{%
It is still open whether such an example exists 
where the anticipation of~$\encoder_0$ and~$\encoder_1$ is~$2$.
It is not difficult to construct such an example where
\emph{some} matchings of the descendant states in~$\encoder_0$
with those in~$\encoder_1$ of
the same $G$-state yield an encoder with infinite anticipation.}

Let $G$ be the graph with
$V(G) = \{ \alpha, \beta, \gamma, \delta \}$
whose even and odd subgraphs,
$G_0$ and $G_1$, are shown in Figures~\ref{fig:G0} and~\ref{fig:G1}:
the even-valued (\resp, odd-valued)
hexadecimal digits form the set~$\Sigma_0$ (\resp, $\Sigma_1$).
Note that~$G_0$ and~$G_1$ are identical graphs
except for the edge labeling and for
a switch between states~$\gamma$ and~$\delta$.

\newsavebox{\countergraph}
\sbox{\countergraph}{%
    \thicklines
    \setlength{\unitlength}{0.48mm}
    \multiput(000,000)(060,000){4}{\circle{20}}
    \qbezier(051,4.5)(030,14.5)(009,4.5)
    \put(051,4.5){\vector(2,-1){0}}
    \qbezier(051,-4.5)(030,-14.5)(009,-4.5)
    \put(009,-4.5){\vector(-2,1){0}}
    \put(070,000){\vector(1,0){40}}
    \qbezier(171,4.7)(150,14.5)(129,4.7)
    \put(171,4.7){\vector(2,-1){0}}
    \qbezier(171,-4.7)(150,-14.5)(129,-4.7)
    \put(171,-4.7){\vector(2,1){0}}
    \qbezier(172.0,006)(120,046)(068.0,006)
    \put(068.0,006){\vector(-3,-2){0}}
    \qbezier(173.5,7.8)(090,68.8)(006.5,7.8)
    \put(006.5,7.8){\vector(-3,-2){0}}
    \qbezier(189.26,004)(210,015)(210,000)
    \qbezier(189.26,-04)(210,-15)(210,000)
    \put(189.26,-04){\vector(-2,1){0}}
}

\begin{figure*}[hbt]
\begin{center}
\thicklines
\setlength{\unitlength}{0.48mm}
\begin{picture}(225,65)(-10,-15)
\put(000,000){\usebox{\countergraph}}
\put(000,000){\makebox(0,0){$\alpha$}}
\put(060,000){\makebox(0,0){$\beta$}}
\put(120,000){\makebox(0,0){$\gamma$}}
\put(180,000){\makebox(0,0){$\delta$}}

\put(030,011.5){\makebox(0,0)[b]{$0$}}
\put(030,-11.5){\makebox(0,0)[t]{$2$}}
\put(090,002){\makebox(0,0)[b]{$4$}}
\put(150,011.5){\makebox(0,0)[b]{$6$}}
\put(150,-11.5){\makebox(0,0)[t]{$8$}}
\put(212,000){\makebox(0,0)[l]{$a$}}
\put(090,022){\makebox(0,0)[b]{$c$}}
\put(050,034){\makebox(0,0)[b]{$e$}}

\end{picture}
\thinlines
\setlength{\unitlength}{1pt}
\end{center}
\caption{Subgraph $G_0$.}
\label{fig:G0}
\end{figure*}

\begin{figure*}[hbt]
\begin{center}
\thicklines
\setlength{\unitlength}{0.48mm}
\begin{picture}(225,65)(-10,-15)
\put(000,000){\usebox{\countergraph}}
\put(000,000){\makebox(0,0){$\alpha$}}
\put(060,000){\makebox(0,0){$\beta$}}
\put(120,000){\makebox(0,0){$\delta$}}
\put(180,000){\makebox(0,0){$\gamma$}}

\put(030,011.5){\makebox(0,0)[b]{$1$}}
\put(030,-11.5){\makebox(0,0)[t]{$3$}}
\put(090,002){\makebox(0,0)[b]{$5$}}
\put(150,011.5){\makebox(0,0)[b]{$7$}}
\put(150,-11.5){\makebox(0,0)[t]{$9$}}
\put(212,000){\makebox(0,0)[l]{$b$}}
\put(090,022){\makebox(0,0)[b]{$d$}}
\put(050,034){\makebox(0,0)[b]{$f$}}
\end{picture}
\thinlines
\setlength{\unitlength}{1pt}
\end{center}
\caption{Subgraph~$G_1$.}
\label{fig:G1}
\end{figure*}
Assuming the ordering $\alpha < \beta < \gamma < \delta$
on the states, the adjacency matrices of the subgraphs are given by
\[
A_{G_0} =
\left(
\arraycolsep0.7ex
\begin{array}{cccc}
0 & 1 & 0 & 0 \\
1 & 0 & 1 & 0 \\
0 & 0 & 0 & 2 \\
1 & 1 & 0 & 1
\end{array}
\right)
\quad \textrm{and} \quad
A_{G_1} =
\left(
\arraycolsep0.7ex
\begin{array}{cccc}
0 & 1 & 0 & 0 \\
1 & 0 & 0 & 1 \\
1 & 1 & 1 & 0 \\
0 & 0 & 2 & 0
\end{array}
\right)
\; ,
\]
and it is easily seen that
$\bldx = (1 \; 2 \; 3 \; 3)^\top$ is an eigenvector of both matrices
associated with the Perron eigenvalue $n = 2$.
We are interested in constructing an~$(S(G),2,2)$-encoder.

Applying a first round of $\bldx$-consistent state splitting
to $G_0$ allows splitting of state~$\delta$
(and only that state in~$G_0$) into two descendant states:
$(\delta,0)$, which has a weight (i.e., approximate eigenvector entry)
of~$1$ and inherits the outgoing edge labeled~$c$,
and $(\delta,1)$, of weight~$2$, which inherits the outgoing edges
labeled~$a$ and~$e$.

In the second round, state~$(\delta,1)$ can be split
into~$(\delta,1,0)$ and~$(\delta,1,1)$
(each of weight~$1$),
and state~$\gamma$ can be fully split into three descendant states,
$(\gamma,\cdot,0)$, $(\gamma,\cdot,1)$, and~$(\gamma,\cdot,2)$,
each of weight~$1$. The outgoing picture from the descendant
states of~$\gamma$ is shown in Figure~\ref{fig:gamma}(a),
where
\[
(\gamma_{0;1},\gamma_{0;2},\gamma_{0;3})
= ((\gamma,\cdot,0),(\gamma,\cdot,1),(\gamma,\cdot,2))
\]
and
\[
(\delta_{0;1},\delta_{0;2},\delta_{0;3})
= ((\delta,0),(\delta,1,0),(\delta,1,1))
\]
(the first subscript in~$\gamma_{0;i}$ and~$\delta_{0;j}$
indicates that we are splitting the graph~$G_0$).

\newsavebox{\gammastates}
\sbox{\gammastates}{
    \thicklines
    \setlength{\unitlength}{0.48mm}
    \multiput(000,000)(060,000){2}{
        \multiput(000,000)(000,030){3}{\circle{20}}
    }
    \put(000,060){\makebox(0,0){$\gamma_{0;1}$}}
    \put(000,030){\makebox(0,0){$\gamma_{0;2}$}}
    \put(000,000){\makebox(0,0){$\gamma_{0;3}$}}
    \put(060,060){\makebox(0,0){$\delta_{0;1}$}}
    \put(060,030){\makebox(0,0){$\delta_{0;2}$}}
    \put(060,000){\makebox(0,0){$\delta_{0;3}$}}
}
\newsavebox{\deltastates}
\sbox{\deltastates}{
    \thicklines
    \setlength{\unitlength}{0.48mm}
    \multiput(000,000)(060,000){2}{
        \multiput(000,000)(000,030){3}{\circle{20}}
    }
    \put(000,060){\makebox(0,0){$\delta_{1;1}$}}
    \put(000,030){\makebox(0,0){$\delta_{1;2}$}}
    \put(000,000){\makebox(0,0){$\delta_{1;3}$}}
    \put(060,060){\makebox(0,0){$\gamma_{1;1}$}}
    \put(060,030){\makebox(0,0){$\gamma_{1;2}$}}
    \put(060,000){\makebox(0,0){$\gamma_{1;3}$}}
}
\newsavebox{\gammadeltaedgesa}
\sbox{\gammadeltaedgesa}{
    \thicklines
    \setlength{\unitlength}{0.48mm}
    \put(000,060){
        \qbezier(051,4.5)(030,14.5)(009,4.5)
        \put(051,4.5){\vector(2,-1){0}}
        \qbezier(051,-4.5)(030,-14.5)(009,-4.5)
        \put(051,-4.5){\vector(2,1){0}}
    }
    \put(010,000){\vector(1,0){40}}
    \put(08.9,04.47){\vector(2,1){42.16}}
    \put(010,030){\vector(1,0){40}}
    \put(08.9,25.52){\vector(2,-1){42.16}}
}
\newsavebox{\gammadeltaedgesb}
\sbox{\gammadeltaedgesb}{
    \thicklines
    \setlength{\unitlength}{0.48mm}
    \multiput(000,000)(000,030){3}{
        \qbezier(051,4.5)(030,14.5)(009,4.5)
        \put(051,4.5){\vector(2,-1){0}}
        \qbezier(051,-4.5)(030,-14.5)(009,-4.5)
        \put(051,-4.5){\vector(2,1){0}}
    }
}
\begin{figure}[hbt]
\begin{center}
\thicklines
\setlength{\unitlength}{0.48mm}
\begin{picture}(180,095)(-10,-15)
\put(000,000){
    \put(000,000){\usebox{\gammastates}}
    \put(000,000){\usebox{\gammadeltaedgesa}}
    \put(000,060){
        \put(030,011.5){\makebox(0,0)[b]{$6$}}
        \put(030,-11.5){\makebox(0,0)[t]{$8$}}
    }

    \put(020,022){\makebox(0,0)[b]{$6$}}
    \put(015,010){\makebox(0,0)[b]{$8$}}
    
    \put(030,032){\makebox(0,0)[b]{$6$}}
    \put(030,002){\makebox(0,0)[b]{$8$}}
    \put(030,-20){\makebox(0,0)[t]{(a)}}
}

\put(100,000){
    \put(000,000){\usebox{\gammastates}}
    \put(000,000){\usebox{\gammadeltaedgesb}}
    \multiput(000,000)(000,030){3}{
        \put(030,011.5){\makebox(0,0)[b]{$6$}}
        \put(030,-07.5){\makebox(0,0)[b]{$8$}}
    }
    \put(030,-20){\makebox(0,0)[t]{(b)}}
}
\end{picture}
\thinlines
\setlength{\unitlength}{1pt}
\end{center}
\caption{Possible outgoing pictures from the descendants of
state~$\gamma$ in~$\encoder_0$.}
\label{fig:gamma}
\end{figure}

In the third round, state~$\beta$ is split into two descendant
states, $(\beta,\cdot,\cdot,0)$ and~$(\beta,\cdot,\cdot,1)$,
each of weight~$1$. At this point, we get an~$(S(G_0),2)$-encoder
$\encoder_0$.

\begin{remark}
\label{rem:limitations1}
State~$\gamma$ can alternatively be split only partially
in the second round, yielding a descendant~$(\gamma,\cdot,0)$
of weight~$1$ and
a descendant~$(\gamma,\cdot,1)$ of weight~$2$,
deferring the splitting of~$(\gamma,\cdot,1)$ to the third round.
In this case, the outgoing picture shown in
Figure~\ref{fig:gamma}(b) is also possible,
where now
$(\gamma_{0;1},\gamma_{0;2},\gamma_{0;3}) =
((\gamma,\cdot,0),(\gamma,\cdot,1,0),(\gamma,\cdot,1,1))$.\qed
\end{remark}

\begin{remark}
\label{rem:limitations2}
In~$\encoder_0$, there are three distinct paths
labeled~$0\,4$ from state~$\alpha$ to the three descendant states
($\gamma_{0;1}$, $\gamma_{0;2}$, and~$\gamma_{0;3}$) of~$\gamma$.\qed
\end{remark}

A respective splitting of~$G_1$ yields an~$(S(G_1),2)$-encoder
$\encoder_1$, in which the possible outgoing pictures from
the descendant states of~$\delta$
(namely, $\delta_{1;i}$)
are shown in Figure~\ref{fig:delta}.

\begin{figure}[hbt]
\begin{center}
\thicklines
\setlength{\unitlength}{0.48mm}
\begin{picture}(180,095)(-10,-15)
\put(000,000){
    \put(000,000){\usebox{\deltastates}}
    \put(000,000){\usebox{\gammadeltaedgesa}}
    \put(000,060){
        \put(030,011.5){\makebox(0,0)[b]{$7$}}
        \put(030,-11.5){\makebox(0,0)[t]{$9$}}
    }

    \put(020,022){\makebox(0,0)[b]{$7$}}
    \put(015,010){\makebox(0,0)[b]{$9$}}
    
    \put(030,032){\makebox(0,0)[b]{$7$}}
    \put(030,002){\makebox(0,0)[b]{$9$}}
    \put(030,-20){\makebox(0,0)[t]{(a)}}
}

\put(100,000){
    \put(000,000){\usebox{\deltastates}}
    \put(000,000){\usebox{\gammadeltaedgesb}}
    \multiput(000,000)(000,030){3}{
        \put(030,011.5){\makebox(0,0)[b]{$7$}}
        \put(030,-07.5){\makebox(0,0)[b]{$9$}}
    }
    \put(030,-20){\makebox(0,0)[t]{(b)}}
}
\end{picture}
\thinlines
\setlength{\unitlength}{1pt}
\end{center}
\caption{Possible outgoing pictures from the descendants of
state~$\delta$ in~$\encoder_1$.}
\label{fig:delta}
\end{figure}

Consider now the $(S(G),2,2)$-encoder $\encoder$
obtained by matching the descendant states in $\encoder_0$ of
each given $G$-state with the respective descendant states
in $\encoder_1$. In particular, we select some bijections
\begin{equation}
\label{eq:varphi0}
\varphi_0 :
\left\{
\gamma_{1;1},
\gamma_{1;2},
\gamma_{1;3}
\right\}
\rightarrow
\left\{
\gamma_{0;1},
\gamma_{0;2},
\gamma_{0;3}
\right\}
\; \phantom{.}
\end{equation}
and
\begin{equation}
\label{eq:varphi1}
\varphi_1 :
\left\{
\delta_{0;1},
\delta_{0;2},
\delta_{0;3}
\right\}
\rightarrow
\left\{
\delta_{1;1},
\delta_{1;2},
\delta_{1;3}
\right\}
\; .
\end{equation}

Next, we show that for every such selection, there is
an arbitrarily long word $\bldw$ 
that can be generated in~$\encoder$ from two different
descendant states of~$\gamma$ (in~$\encoder_0$).
And since both these states
are reachable in~$\encoder_0$ from state~$\alpha$ by
paths labeled~$0\,4$, it will follow that~$\encoder$ does not have
finite anticipation.

Such a word~$\bldw$ will be generated by paths that toggle
between~$\encoder_0$ and~$\encoder_1$ after each symbol.
For example, suppose that
$\varphi_0(\gamma_{1;i}) = \gamma_{0;i}$ and
$\varphi_1(\delta_{0;i}) = \delta_{1;i}$ , for $i = 1, 2, 3$.
Then the word~$6\,7\,6\,7\,\ldots$ can be generated by
the following two paths:
\[
\gamma_{0;1}
\;\; \stackrel{6}{\longrightarrow}
\;\; {\delta_{0;1} \atop \delta_{1;1}} 
\;\; \stackrel{7}{\longrightarrow}
\;\; {\gamma_{0;1} \atop \gamma_{1;1}} 
\;\; \stackrel{6}{\longrightarrow}
\;\; {\delta_{0;1} \atop \delta_{1;1}} 
\;\; \stackrel{7}{\longrightarrow}
\;\; \cdots
\]
and
\[
\gamma_{0;2}
\;\; \stackrel{6}{\longrightarrow}
\;\; {\delta_{0;2} \atop \delta_{1;2}} 
\;\; \stackrel{7}{\longrightarrow}
\;\; {\gamma_{0;2} \atop \gamma_{1;2}} 
\;\; \stackrel{6}{\longrightarrow}
\;\; {\delta_{0;2} \atop \delta_{1;2}} 
\;\; \stackrel{7}{\longrightarrow}
\;\; \cdots
\]

\begin{lemma}
For any two bijections $\varphi_0$ and $\varphi_1$
as in~(\ref{eq:varphi0})--(\ref{eq:varphi1})
and for any positive integer~$\ell$,
there are at least two paths of length~$\ell$
in~$\encoder$ that satisfy the following properties.
{\parskip0ex%
\renewcommand{\theenumi}{\roman{enumi}}%
\renewcommand{\labelenumi}{(\theenumi)}%
\begin{enumerate}
\itemsep0ex
\item
\label{item:1}
The paths generate the same word.
\item
\label{item:2}
The paths start at distinct descendants of~$\gamma$ in~$\encoder_0$.
\item
\label{item:3}
The states along each path alternate between descendants of~$\gamma$
in~$\encoder_0$ and descendants of~$\delta$ in~$\encoder_1$.
\end{enumerate}
}
\end{lemma}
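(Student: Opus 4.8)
The plan is to repackage the relevant fragment of~$\encoder$ as a tiny finite automaton and then take a synchronized product of it with itself, using only the elementary fact that a finite directed graph in which every vertex has out-degree at least~$1$ admits an infinite walk out of every vertex. First I would record the structural facts. In~$G_0$ every edge leaving~$\gamma$ enters~$\delta$ (row~$\gamma$ of~$A_{G_0}$ is $(0,0,0,2)$), so after state splitting each of the three descendants $\gamma_{0;1},\gamma_{0;2},\gamma_{0;3}$ of~$\gamma$ in~$\encoder_0$ has its two outgoing edges --- one labelled~$6$, one labelled~$8$, as enumerated in Figure~\ref{fig:gamma} and Remark~\ref{rem:limitations1} --- terminating in descendants of~$\delta$. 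Symmetrically, in~$G_1$ every edge leaving~$\delta$ enters~$\gamma$ (row~$\delta$ of~$A_{G_1}$ is $(0,0,2,0)$), so each descendant~$\delta_{1;j}$ of~$\delta$ in~$\encoder_1$ has its two outgoing edges --- one labelled~$7$, one labelled~$9$, as in Figure~\ref{fig:delta} --- terminating in descendants of~$\gamma$. Hence in~$\encoder$ there is a well-defined \emph{even move} out of each descendant of~$\gamma$ (for $c\in\{6,8\}$: follow the unique $c$-edge of~$\encoder_0$, then the matching~$\varphi_1$, reaching a uniquely determined descendant of~$\delta$) and a well-defined \emph{odd move} out of each descendant of~$\delta$ (for $c'\in\{7,9\}$: follow~$\encoder_1$, then~$\varphi_0$, reaching a uniquely determined descendant of~$\gamma$); a path of~$\encoder$ meeting condition~(iii) is exactly one that alternates even and odd moves.

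Next I would form the product graph~$\Pi$ whose vertices are the pairs of descendants of~$\gamma$ together with the pairs of descendants of~$\delta$ (diagonal pairs allowed), with an edge $(p,q)\stackrel{c}{\longrightarrow}(p',q')$ whenever $c$ is admissible at both~$p$ and~$q$ and the corresponding move by~$c$ (even if $p,q$ are descendants of~$\gamma$, odd if they are descendants of~$\delta$) carries $p\mapsto p'$ and $q\mapsto q'$. By the previous paragraph the admissible labels at a $\gamma$-pair are $6,8$ and at a $\delta$-pair are $7,9$ --- whether or not the pair is diagonal --- so every vertex of~$\Pi$ has out-degree~$2$. Since~$\Pi$ is finite, there is an infinite walk out of the vertex $\{\gamma_{0;1},\gamma_{0;2}\}$, which is a pair of distinct descendants of~$\gamma$; I would keep its first~$\ell$ edges and take the two required paths of~$\encoder$ to be the two coordinate projections of this length-$\ell$ walk in~$\Pi$. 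They have length~$\ell$; they generate the same word, because each edge of~$\Pi$ carries a single label used in both coordinates (this is~(i)); they start at the distinct states $\gamma_{0;1}$ and~$\gamma_{0;2}$ (this is~(ii)); and their states alternate between descendants of~$\gamma$ in~$\encoder_0$ and descendants of~$\delta$ in~$\encoder_1$, since they are composed of even and odd moves (this is~(iii)).

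I expect the only subtle point --- and the reason the statement is isolated as a lemma --- to be that the walk in~$\Pi$ may hit the diagonal, i.e.\ the two $\encoder$-paths may coalesce after a common suffix. This is harmless for the lemma: the two paths still \emph{start} at distinct descendants of~$\gamma$ (indeed they disagree on the whole non-diagonal prefix), so~(ii) still holds and~(i),(iii) are unaffected; this is also precisely why the lemma asks only that the two paths \emph{start} at distinct states. The other thing one must pin down is the structural input invoked above --- that in the split encoders every descendant of~$\gamma$ carries exactly one outgoing $6$-edge and one outgoing $8$-edge, and dually every descendant of~$\delta$ carries one $7$-edge and one $9$-edge --- which is exactly what Figures~\ref{fig:gamma}--\ref{fig:delta} and Remark~\ref{rem:limitations1} enumerate; if one prefers not to read this off the figures, a pigeonhole count on the six even edges leaving the three descendants of~$\gamma$ shows at least that some pair of distinct descendants shares an admissible even label, which already suffices to launch the walk (at the cost of a short extra check for the first move).
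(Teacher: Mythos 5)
There is a genuine gap, and it sits exactly where the difficulty of this lemma lies. Your whole argument rests on the structural claim that \emph{every} descendant of~$\gamma$ in~$\encoder_0$ has one outgoing edge labelled~$6$ and one labelled~$8$ (and dually for the descendants of~$\delta$ in~$\encoder_1$), so that every vertex of your product graph~$\Pi$ has out-degree~$2$. That claim holds only for the splitting of Figure~\ref{fig:gamma}(b). In the splitting of Figure~\ref{fig:gamma}(a) --- which is the one actually produced by the splitting schedule described in the appendix --- the state $\gamma_{0;1}$ carries one $6$-edge and one $8$-edge, but $\gamma_{0;2}$ carries \emph{two} $6$-edges (to $\delta_{0;2}$ and $\delta_{0;3}$) and $\gamma_{0;3}$ carries two $8$-edges; that is precisely what the paper means by ``$\gamma_{0;2}$ and $\gamma_{0;3}$ do not share any outgoing labels.'' Consequently your ``even move by $c$'' is neither total (there is no $8$-move out of $\gamma_{0;2}$) nor single-valued (the $6$-move out of $\gamma_{0;2}$ has two possible targets), and the product vertex $\{\gamma_{0;2},\gamma_{0;3}\}$ (likewise $\{\delta_{1;2},\delta_{1;3}\}$ on the odd side, under Figure~\ref{fig:delta}(a)) has out-degree~$0$ in~$\Pi$. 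So the ``finite graph with positive out-degree admits an infinite walk'' step fails: the synchronized walk can dead-end. Your pigeonhole fallback only launches the walk; it does not keep it alive, because after one even move and the $\varphi_1$-identification the pair of $\delta$-descendants reached may be exactly the label-disjoint pair.

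What is missing is the mechanism the paper uses to survive this: because $\gamma_{0;2}$ (say) has two equally-labelled outgoing edges, any word reaching $\delta_{0;2}$ through it also reaches $\delta_{0;3}$, so one can carry along \emph{two} candidate continuations $\pi_2,\pi_2'$ of the second path, agreeing on the generated word and differing only in their terminal state, and at each step choose whichever of the two lands on a pair of states that still shares an outgoing label (such a choice always exists because each descendant shares a label with at least one other descendant, though not with every other). Without this branching-and-choosing device --- or some equivalent argument showing the product walk can always be steered away from the out-degree-$0$ vertices --- the proof only covers the case in which both subgraphs are split as in part~(b) of Figures~\ref{fig:gamma} and~\ref{fig:delta}, which is the one-line easy case of the paper's proof.
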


\begin{proof}
When both subgraphs~$G_0$ and~$G_1$ are split according to
part~(b) in Figures~\ref{fig:gamma} and~\ref{fig:delta},
then the word $6 \, 7 \, 6 \, 7 \, \ldots$ can be generated
in~$\encoder$ from $\gamma_{0;1}$, $\gamma_{0;2}$, and~$\gamma_{0;3}$.
Hence, we assume from now on in the proof that at most
one of the subgraphs is split according to part~(b).

Our proof is by induction on~$\ell$.
The case~$\ell = 1$ is obvious, yet when~$G_0$ is split
according to Figure~\ref{fig:gamma}(b), we will also
need to establish the case~$\ell = 2$.
Let $i, j$ be distinct in~$\{ 1, 2, 3 \}$ such that
\[
\left\{
\varphi_1(\delta_{0;i}), \varphi_1(\delta_{0;j})
\right\}
\ne
\left\{ \delta_{1;2}, \delta_{1;3} \right\}
\; .
\]
By Figure~\ref{fig:delta}, this selection guarantees that
$\varphi_1(\delta_{0;i})$ and $\varphi_1(\delta_{0;j})$
share an outgoing label~$w' \in \{ 7, 9 \}$. 
Hence, if~$G_0$ is split according to part~(b),
then the word $6 \, w'$ (as well as the word~$8 \, w'$) can be
generated in~$\encoder$ both from~$\gamma_{0;i}$
and from~$\gamma_{0;j}$.

Turning to the induction step, assume that for some odd positive~$\ell$
there exist paths $\pi_1$ and~$\pi_2$
that satisfy properties~(\ref{item:1})--(\ref{item:3}),
and let~$\bldw$ be the word generated by both paths;
due to the symmetry between
Figures~\ref{fig:gamma} and~\ref{fig:delta}, the proof is
similar for even~$\ell$.
Let $\gamma_{0;i}$ (\resp, $\gamma_{0;j}$) be
the \emph{penultimate} state visited along the path~$\pi_1$
(\resp, $\pi_2$);
note that $i \ne j$, or else~$\encoder$ would not be lossless
(by Remark~\ref{rem:limitations1}).
We now distinguish between two cases.

\emph{Case~1: $G_0$ is split according to Figure~\ref{fig:gamma}(a).}
In that figure
$\gamma_{0;2}$ and~$\gamma_{0;3}$ do not share any outgoing
labels and, therefore, $\{ i, j \} \ne \{ 2, 3 \}$.
Thus, $i$ (say) is~$1$---%
and therefore $\pi_1$ terminates in~$\delta_{0;1}$---%
and $j \in \{ 2, 3 \}$, and, without loss of generality,
$\pi_2$ terminates in~$\delta_{0;2}$; moreover, there
exists a path~$\pi_2'$ that differs from~$\pi_2$ only
in that it terminates in~$\delta_{0;3}$ instead
(and~$\pi_2'$ still generates the same word~$\bldw$).
Now, any descendant state of~$\delta$ in~$\encoder_1$ shares
at least one outgoing label with at least one other descendant state
of~$\delta$ in $\encoder_1$; hence,
$\varphi_1(\delta_{0;1})$ must have a common outgoing label
$w' \in \{ 7, 9 \}$ with either $\varphi_1(\delta_{0;2})$
or $\varphi_1(\delta_{0;3})$.
Thus, $\pi_1$, as well as either $\pi_2$ or~$\pi_2'$,
can be extended by an edge (of $\encoder_1$) labeled $w'$,
to produce two paths of length~$\ell{+}1$
that satisfy properties~(\ref{item:1})--(\ref{item:3})
(both paths generating the word $\bldw w'$).

\emph{Case~2: $G_0$ is split according to Figure~\ref{fig:gamma}(b).}
By our assumption this implies that~$G_1$ is split according
to Figure~\ref{fig:delta}(a), so we can apply the analysis
of Case~1 to length~$\ell{-}1$ (with~$G_0$ and~$G_1$ switching roles).
In particular, there exist paths~$\pi_1$, $\pi_2$, and $\pi_2'$
of length~$\ell{-}1$, all generating the same word~$\bldw$,
such that~$\pi_1$ and~$\pi_2$ start at distinct descendants
of $\gamma$ (in~$\encoder_0$),
$\pi_1$ terminates in~$\gamma_{1;1}$,
and~$\pi_2$ and~$\pi_2'$ differ only
in their last edge: $\pi_2$ terminates in~$\gamma_{1;2}$
while~$\pi_2'$ terminates in~$\gamma_{1;3}$.
Write $\gamma_{0;r} = \varphi_0(\gamma_{1;1})$,
$\gamma_{0;s} = \varphi_0(\gamma_{1;2})$, and
$\gamma_{0;t} = \varphi_0(\gamma_{1;3})$.
Similarly to Case~1,
$\varphi_1(\delta_{0;r})$ must have a common outgoing label
$w' \in \{ 7, 9 \}$ with either $\varphi_1(\delta_{0;s})$
or $\varphi_1(\delta_{0;t})$.
Thus, $\pi_1$, as well as either $\pi_2$ or~$\pi_2'$,
can be extended by two edges---the first of~$\encoder_0$
labeled $6$ (or~$8$) and the second of~$\encoder_1$ labeled~$w'$---%
to produce two paths of length~$\ell{+}1$
that satisfy properties~(\ref{item:1})--(\ref{item:3})
(both paths generating either the word $\bldw \, 6 \, w'$
or the word $\bldw \, 8 \, w'$).
\end{proof}

\end{document}